\newtheorem{theorem}{Theorem}
\newtheorem{lemma}[theorem]{Lemma}
\newtheorem{assumption}{Assumption~A-\kern-0pt}
\newtheorem{proposition}[theorem]{Proposition}
\newtheorem{corollary}[theorem]{Corollary}
\newtheorem*{remark*}{Remark}
\DeclareMathOperator{\tr}{tr}
\newcommand{\bQ}{{\bf Q}}
\newcommand{\bT}{{\bf T}}
\newcommand{\bOmega}{\boldsymbol{\Omega}}
\newcommand{\bTheta}{\boldsymbol{\Theta}}
\newcommand{\bSigma}{\boldsymbol{\Sigma}}
\newcommand{\bxi}{\boldsymbol{\xi}}
\let\emptyset\varnothing
\title{On the Smallest Eigenvalue of General correlated Gaussian Matrices}
\author{{Abla Kammoun, ~\IEEEmembership{Member, ~IEEE } and
Mohamed-Slim Alouini, ~\IEEEmembership{Fellow, ~IEEE,}
}
}
\begin{document}
\maketitle
\begin{abstract}

 This paper investigates the  behaviour of the spectrum of generally correlated Gaussian random matrices whose columns are zero-mean independent vectors but have different correlations, under the specific regime where the number of their columns and that of their rows grow at infinity with the same pace.  This work  is, in particular, motivated by applications from statistical signal processing and wireless communications, where this kind of matrices naturally arise. 
 Following the approach proposed  in \cite{LOU10}, we prove that under some specific conditions, the smallest singular value of  generally correlated Gaussian matrices is almost surely away from zero. 
\end{abstract}

\section{Introduction}
Let $\bSigma_n$ be a rectangular random matrix of size $N\times n$. The study of the behaviour of the asymptotic spectrum of $\bSigma_n$ when $N,n\to+\infty$ has been investigated in several works. As is known, when the elements of $\bSigma_n$ are zero-mean and unit variance independent and identically distributed (i.i.d.) and $\frac{N}{n}\to c<1$,  the empirical measure of the eigenvalues of $\frac{1}{n}\bSigma_n\bSigma_n^*$ converge weakly to a deterministic probability distribution which is supported  by the interval $\left[(1-\sqrt{c})^2,(1+\sqrt{c})^2\right]$ \cite{marchenko}. A question which immediately arises in connection with this result concerns the asymptotic behaviour of the extreme singular values. 
At first sight, one would expect  the smallest and the largest eigenvalues of $\frac{1}{n}\bSigma_n\bSigma_n^*$ to converge to $(1-\sqrt{c})^2$ and $(1+\sqrt{c})^2$, respectively. While this statement is correct, it cannot be directly inferred from the aforementioned weak convergence result. As a matter of fact, the proof generally requires the use of more advanced techniques improving the weak convergence result. 
 First findings related to these issues can be traced back to the works of J. Silverstein \cite{SIL85} and S. Geman \cite{GEM80}, who provided a rigorous proof showing that the extreme eigenvalues  of $\frac{1}{n}\bSigma_n\bSigma_n^*$ converge in the Gaussian case to the edges of the limiting support $(1-\sqrt{c})^2$ and $(1+\sqrt{c})^2$. This result was then extended to the case of non-Gaussian matrices but with independent and identically distributed entries \cite{BAI93}. The characterization of the limiting support of $\bSigma_n$ is much more difficult in the case where the column entries of $\bSigma_n$ are correlated. Instead of determining the exact support, many works focused on establishing the almost sure absence of eigenvalues of $\frac{1}{n}\bSigma_n\bSigma_n^*$ in any closed interval outside the support of the limiting distribution. 
We can cite, for sake of illustration, the work of \cite{BAI99} applying for the simple-correlated case where the columns of $\bSigma_n$ are correlated with the same correlation matrix and that of \cite{LOU10} which deals with non-centered uncorrelated models. 

In many applications, this result, though limited, is essential. It can be, for instance,  used to efficiently handle random quantities involving the Gram matrix $\frac{1}{n}\bSigma_n\bSigma_n^*$ or its inverse.  


In this paper, we consider the generally correlated Gaussian model in which the columns of $\bSigma_n$ are zero-mean independent Gaussian random vectors but with different correlations. First results related to this model are due to Wagner et al. \cite{WAG10} who characterize the asymptotic behaviour of the limiting distribution of $\frac{1}{n}\bSigma_n\bSigma_n^*$. This result was in particular applied to the analysis of the performance of the regularized-zero forcing linear precoding technique \cite{WAG10}.

 Since then, this model has known an increasing popularity, mostly spurred by applications in multi-user nultiple-input-single-output (MISO) systems \cite{caire-13,Kammoun2014b} and the very recent robust signal processing applications \cite{alouini-14}. In what follows,  we provide two different applications where the general correlation Gaussian model arises.

\begin{paragraph}{Multiple Input Single Output Channel}
	Consider the downlink of a single-cell system in which a base station (BS) with $N$ antennas serves $n$ users equipped each with a single antenna each and assume that $N<n$. The downlink channel vector ${\bf h}_k$ between the BS and the $k$ th user is given by \cite{WAG10}:
	$$
	{\bf h}_k ={\bf R}_k^{\frac{1}{2}}{\bf z}_k.
	$$
	with ${\bf z}_k$ is a standard complex Gaussian vector and 
	matrix ${\bf R}_k$ is essentially function of the richness of the scattering between the BS and the user of interest and as such is specific for each user. To mitigate inter-user interference, the BS precodes the transmitted signal by a matrix ${\bf G}$ which depends on the channel conditions for all users. Among the used precoding techniques, we can cite the Zero-forcing (ZF) precoding given by \cite{HOC02}:
	$$
	{\bf G}=\left(\frac{1}{n}{\bf H}{\bf H}^*\right)^{-1}{\bf H},
	$$
	where ${\bf H}=\left[{\bf h}_1,\cdots,{\bf h}_n\right]$.
	The ZF precoding involves the inversion of the Gram matrix ${\bf H}{\bf H}^*$, a  step which becomes  critical in case the smallest eigenvalue is near zero. 
	In order to analyze the performance of using the ZF precoding, the regime under which the number of antennas $N$ and the number of users $n$ increase with the same pace is often assumed. 
	The performance of the ZF precoding under this regime has been studied in \cite{WAG10}, where it has been assumed that the smallest eigenvalue of $\frac{1}{n}{\bf H}{\bf H}^*$ is bounded away from zero for all large $N$ and $n$. Although this assumption holds true for specific cases where all matrices ${\bf R}_k$ are equal, there is no proof supporting its validity in general. This is the reason why the authors in \cite{WAG10} opted to add it as an assumption, which is likely to always hold true and thus is unnecessary.
	
\end{paragraph}
\begin{paragraph}{Robust Statistics}
	Consider a temporal series of $n$ vector observations ${\bf y}_1,\cdots,{\bf y}_n$ of size $N\times 1$. Assume that the contribution of each ${\bf y}_i$ can be decomposed as the sum of a useful signal plus an elliptical noise, i.e, 
	\begin{equation}
	{\bf y}_i= {\bf s}_i +{\bf x}_i,
	\end{equation}
	where ${\bf s}_1,\cdots,{\bf s}_n$ are Gaussian independent $N\times 1$ random Gaussian vectors with covariance ${\bf R}$ and ${\bf x}_i$ is drawn from a Compound Gaussian distribution, i.e,
	\begin{equation}
	{\bf x}_i=\sqrt{\tau}_i {\bf z}_i,
	\end{equation}
	where ${\bf z}_i$ are standard complex Gaussian vectors and $\tau_1,\cdots,\tau_n$ are scalar positive-valued random variables. We consider the problem of estimating the covarince matrix of ${\bf x}_i$. 
	In order to mitigate the impact of the heavy-tailed distributed noise, the use of robust covariance estimates known also as robust scatter estimates has been proven to be a good solution. These are given as the unique solution of the following equation:
	\begin{equation}
	\hat{\bf C}_N=\sum_{i=1}^n u({\bf x}_i^* \hat{\bf C}_N^{-1}{\bf x}_i){\bf x}_i{\bf x}_i^*,
	\end{equation}
	where $x:\mapsto u(x)$ is a scalar functional satisfying certain conditions \cite{couillet-pascal-2013}. In a recent submitted work, we prove that matrix $\hat{\bf C}_N$  converges in the operator norm to $\hat{\bf S}_N$ where $\hat{\bf S}_N$ is given by:
	\begin{equation}
	\hat{\bf S}_N=\sum_{i=1}^n v(\delta_i) {\bf x}_i{\bf x}_i^{*},
	\end{equation}
	with ${\delta}_1,\cdots,\delta_n$ are solutions of some fixed point equations \cite{alouini-14}. 
	Conditioning on $\tau_i$, matrix ${\bf S}_N$ follows the model of generally correlated Gaussian matrices. The proof in \cite{alouini-14} relies on the control of the smallest eigenvalue of $\hat{\bf S}_N$. 

	\end{paragraph}
Despite its importance, the generally correlated Gaussian model has not been extensively explored, most probably because of its recent emergence as a major practical model. Several questions related to the behaviour of the eigenvalues remain unanswered. 
A major question, illustrated by the two examples above, and which triggered our motivation for this work, concerns the control of the smallest eigenvalue of the Gram matrix $\frac{1}{n}\bSigma_n\bSigma_n^*$. Knowing that the smallest eigenvalue stay away of zero in the i.i.d case when $N<n$, one  can expect the same behaviour to hold for the general Gaussian correlated case under probably some mild conditions on the correlation matrices. In this paper, we provide a rigorous proof for this statement by essentially building on the techniques developed by  \cite{LOU10}. 




%
\section{Problem statement and review of some results}
All along the paper, we consider integers $n,N,\overline{N}$ such that $n \geq {N}$ and $\overline{N}\geq N$. We denote by $c_N$ the ratio $\frac{N}{n}$. We make the following assumptions:
\begin{assumption}
\label{ass:regime}
\begin{equation}
0<\liminf c_N \leq \limsup c_N <1. 
\end{equation}
\end{assumption}
The objective of this paper is to provide  some interesting properties  of the spectrum of generally correlated Gaussian matrices, i.e matrices whose columns are zero-mean independent random vectors but have different covariances. 
 Throughout this paper,
 matrix $\bSigma_n$ represents the complex-valued $N\times n$ matrix given by:
\begin{equation}
\bSigma_n=\left[ \bxi_1,\cdots, \bxi_n\right],
\end{equation}
where $\bxi_1,\cdots,\bxi_n$ are assumed to satisfy the following assumptions:
\begin{assumption}
\label{ass:boundedness}
$\left(\bxi_i\right)_{i=1}^n$ are zero-mean complex  Gaussian vectors of size $\overline{N}\times 1$ with covariance $\bTheta_i$ where $\left(\bTheta_i\right)_{i=1}^n$ is a sequence of  $N\times \overline{N}$ matrices verifying:
\begin{align}
	&w_{\rm min}\triangleq\inf_N \min_{1\leq i\leq n}\lambda_1\left(\bOmega_i\right) > 0,\\
	&w_{\rm max}\triangleq\sup_N \max_{1\leq i\leq n}\lambda_N\left(\bOmega_i\right) <+\infty,
\end{align}
where $\bOmega_i\triangleq \bTheta_i\bTheta_i^*$ and $\lambda_1(\bOmega_i)$ and $\lambda_N(\bOmega_i)$ are the smallest and largest eigenvalues of $\bOmega_i$.
\end{assumption}
We denote in what follows by $\lambda_1\leq \cdots\leq \lambda_N$ the eigenvalues of $\frac{1}{n}\bSigma_n\bSigma_n^*$. The empirical eigenvalue distribution of $\frac{1}{n}\bSigma_N\bSigma_N^*$ is defined as:
\begin{equation}
\hat{\mu}_N =\frac{1}{N}\sum_{k=1}^N \delta_{\lambda_k}.
\label{eq:measure}
\end{equation}
In order to characterize the asymptotic behaviour of $\hat{\mu}_N$, it is in practice quite common to analyze that of its Stieltjes transform (ST). 
Since the ST of a positive finite measure $\mu$ is given by:
$$
\Psi_{\mu}(z)=\int_{\mathbb{R}} \frac{d\mu(\lambda)}{\lambda -z},
$$
the ST of the empirical eigenvalue distribution in \eqref{eq:measure} can be written as:
\begin{equation}
\hat{m}_N(z)=\frac{1}{N}\sum_{k=1}^N \frac{1}{\lambda_k-z}.
\label{eq:m_Nz}
\end{equation}
Denote by  ${\bf Q}_N(z)=\left(\frac{1}{n}\bSigma_n\bSigma_n^*-z {\bf I}_N\right)^{-1}$. In the parlance of random matrix theory,  ${\bf Q}_N(z)$ is referred to as the resolvent matrix. From \eqref{eq:m_Nz}, one can easily see that:
\begin{equation}
\hat{m}_N(z)=\frac{1}{N}\tr {\bf Q}_N(z).
\label{eq:Q_N_m_N}
\end{equation}
Relation \eqref{eq:Q_N_m_N} clearly establishes the link between the resolvent matrix and the ST of the empirical eigenvalue distribution $\hat{\mu}_N$. It is a fundamental equation that accounts for the key role played by  the resolvent matrix in the theory of random matrices. 
As a matter of fact, the study of the asymptotic behaviour of the resolvent matrix has provided an important load of new results concerning different statistical models \cite{HAC07,HAC06}. The model of generally correlated random matrices has recently been studied in \cite{WAG10}, where it has been proven that the ST of the empirical eigenvalue distribution converges almost surely to a deterministic function which is the ST of some probability distribution. 
 More formally, it is well known from \cite{WAG10}, that it exists a sequence of deterministic measures $\mu_N$ such that $\hat{\mu}_N-\mu_N$ converges weakly to zero almost surely. Measure $\mu_N$ is characterized through its  ST $m_N(z)$ which is given by:
$$
m_N(z)=\frac{1}{N}\tr \left(\frac{1}{n}\sum_{i=1}^n\frac{\bOmega_i}{1+\delta_i(z)} -zI_N\right)^{-1},
$$
where $\delta_1,\cdots,\delta_n$ form the unique solutions that are ST of non-negative finite measure  of the following system of equations:
$$
\delta_i(z)=\frac{1}{n}\tr \bOmega_i \left(\frac{1}{n}\sum_{j=1}^n \frac{\bOmega_j}{1+\delta_j(z)} -z I_N\right)^{-1}
$$
for each $z\in \mathbb{C}\backslash \mathbb{R}^{+}$.

 In the following, we denote  by $\bT_N$, the matrix:
$$
\bT_N(z)=\left(\frac{1}{n}\sum_{i=1}^n\frac{\bOmega_i}{1+\delta_i(z)} -zI_N\right)^{-1},
$$
and
$$
m_N(z)=\frac{1}{N}\tr {\bf T}_N(z).
$$
 As $\hat{\mu}_N-\mu_N $ converge to zero weakly almost surely, we have:
$$
\hat{m}_N(z)-m_N(z)\stackrel{a.s.}{\to} 0
$$ 
for each $z\in\mathbb{C}\backslash\mathbb{R}^{+}$. 

\section{Main results}
In this paper, we prove that under Assumptions \ref{ass:regime}-\ref{ass:boundedness}, the smallest eigenvalue of the Gram matrix $\frac{1}{n}\bSigma_n\bSigma_n^*$ stays away zero almost surely for $N$ large enough. This in particular implies, that for some $\epsilon >0$, $\hat{\mu}_N\left[0,\epsilon\right]=0$ for $N$ large enough. Since $\hat{\mu}_N-\mu_N$ converges weakly to zero, it is not difficult to convince oneself that one needs to start by showing that the support $\mathcal{S}_N$ of $\mu_N$ does not contain $0$. In particular, we prove the following result: 
\begin{theorem}
Under Assumption \ref{ass:regime} and \ref{ass:boundedness}, $0\notin \mathcal{S}_N$. In particular, there exists $\epsilon >0$ such that:
$$
\left[0,\epsilon\right]\cap \mathcal{S}_N=\emptyset.
$$
\label{th:zero}
\end{theorem}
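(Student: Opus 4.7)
The plan is to prove that the $\delta_i(z)$ and $m_N(z)$ extend holomorphically to a complex neighborhood of $0$, taking real values on the real axis there. By Stieltjes inversion this forces $\mu_N$ to place no mass in a small real neighborhood of $0$, yielding $[0,\epsilon]\cap\mathcal{S}_N=\emptyset$.

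The main technical input is a trace identity obtained by applying the normalized trace to $\bT_N(z)\bT_N^{-1}(z)=I_N$: for every $z\in\mathbb{C}\setminus\mathbb{R}^+$,
$$
\sum_{i=1}^n\frac{\delta_i(z)}{1+\delta_i(z)}=N\bigl(1+z\,m_N(z)\bigr),
$$
which, on setting $\alpha_i(z):=(1+\delta_i(z))^{-1}$, rewrites as
$$
\sum_{i=1}^n\alpha_i(z)=n-N-zN\,m_N(z).
$$
For real $z<0$ the quantities $\delta_i(z)$ and $m_N(z)$ are strictly positive (each being the Stieltjes transform of a nonnegative finite measure on $\mathbb{R}^+$), hence $\sum_i\alpha_i(z)\geq n-N$. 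Combined with $\bOmega_i\succeq w_{\min}I_N$ from Assumption \ref{ass:boundedness}, this yields
$$
\bT_N^{-1}(z)\succeq\frac{w_{\min}}{n}\sum_{i=1}^n\alpha_i(z)\,I_N\succeq w_{\min}(1-c_N)\,I_N,
$$
a bound uniform in $z<0$ and, by Assumption \ref{ass:regime}, in $N$. This is the conceptual heart of the argument: the factor $1-c_N$, positive because $\limsup c_N<1$, prevents the spectrum of $\bT_N^{-1}$ from collapsing to $0$ as $z\to 0^-$.

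The uniform spectral lower bound gives $\|\bT_N(z)\|\leq[w_{\min}(1-c_N)]^{-1}$, hence $|m_N(z)|$ and every $\delta_i(z)$ are uniformly bounded on $(-\infty,0)$; by monotonicity, finite limits at $0^-$ exist and the fixed-point system admits a well-defined solution at $z=0$. I would then apply the holomorphic implicit function theorem to
$$
F_i(\boldsymbol{\delta},z):=\delta_i-\frac{1}{n}\tr\Bigl[\bOmega_i\Bigl(\frac{1}{n}\sum_{j=1}^n\frac{\bOmega_j}{1+\delta_j}-zI_N\Bigr)^{-1}\Bigr]=0
$$
at the point $(\boldsymbol{\delta}(0^-),0)$ to extend $\boldsymbol{\delta}$ and $m_N$ holomorphically to a complex neighborhood of $0$ on which they remain real on the real axis; Stieltjes inversion then concludes.

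The main obstacle is verifying non-singularity of the Jacobian $\partial_{\boldsymbol{\delta}}F=I_n-M$, with $M_{ij}=\frac{\alpha_j^2}{n^2}\tr(\bOmega_i\bT_N\bOmega_j\bT_N)$, at $z=0$, together with securing a neighborhood radius $\epsilon>0$ independent of $N$. This is precisely where the machinery of \cite{LOU10} is invoked: a Perron--Frobenius/spectral-radius estimate on $M$, again leveraging the uniform lower bound $w_{\min}(1-c_N)$ on $\lambda_{\min}(\bT_N^{-1})$. The earlier quantitative bounds are elementary once the trace identity is at hand; the delicate bookkeeping lies in ensuring everything, including the outcome of the implicit function theorem, is uniform in $N$.
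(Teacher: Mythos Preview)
Your proposal is correct and follows the same overall architecture as the paper's proof: establish that the fixed-point system has a finite solution at $z=0$, invoke the holomorphic implicit function theorem there, verify the Jacobian $I_n-M$ is invertible via a Perron--Frobenius type bound, and conclude by Stieltjes inversion that $\mu_N$ places no mass near $0$.

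The execution differs in one respect worth noting. The paper constructs the limit $\boldsymbol{\delta}(0^-)$ by recasting the fixed-point equations at each $z=-1/p$ as a standard interference function and invoking Yates' framework for existence and uniqueness, then showing monotonicity in $p$ and boundedness of the minimum coordinate via the identity $\sum_i\frac{\overline{x}_i^p}{1+\overline{x}_i^p}\leq N$. You bypass the interference-function machinery entirely: since the $\delta_i(z)$ are already defined for $z<0$ by \cite{WAG10}, you use the same trace identity but in the operator form $\bT_N^{-1}(z)\succeq w_{\min}(1-c_N)I_N$, giving a direct uniform bound on $\|\bT_N\|$ and hence on every $\delta_i$. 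This is more economical and also makes the uniformity in $N$ transparent, a point the paper leaves implicit but which is needed downstream for the corollary. For the Jacobian step, the paper's concrete mechanism is the identity $M\,(1+\boldsymbol{\ell})=\boldsymbol{\ell}$ at $z=0$, which via Lemma~\ref{lemma:ABx} gives $\rho(M)\leq 1-(1+\max_i\ell_i)^{-1}<1$; this is exactly the Perron--Frobenius estimate you allude to, and your uniform bound on $\ell_i$ feeds into it to make $\epsilon$ independent of $N$.
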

To avoid disrupting the flow of the article, the proof of Theorem \ref{th:zero} is deferred to Appendix \ref{app:th_zero}.

Theorem \ref{th:zero} ensures that  $0$ does not belong to the support of the deterministic measure $\mu_N$. To conclude, it suffices to supplement this result with a second one, which establishes that almost surely, there is no eigenvalue of $\frac{1}{n}\bSigma_n\bSigma_n^*$ that goes outside the support of $\mathcal{S}_N$. This kind of result has already been shown to hold for other statistical models, by  either using  properties of the ST and bounds on the moments of martingale difference sequences \cite{SIL98,Paul-09,SIL12} or resorting to tools based on Gaussian calculus\cite{LOU10}. Since we assume in this paper that $\bSigma_n$ has Gaussian entries, we rather build on the method of \cite{LOU10} which also originates from some of the ideas of \cite{Capitaine-10}.  In particular, we establish the following result:
\begin{theorem}
Assume that there exists a positive quantity $\epsilon >0$ and two real values $a,b\in\mathbb{R}$ such that for all $N$ large enough:
$$
\left]a-\epsilon,b+\epsilon\right[\cap \mathcal{S}_N =\emptyset
$$
Then, with probability one, no eigenvalue of $\frac{1}{n}\bSigma_n\bSigma_n^*$ appears in $\left[a,b\right]$ for all $N$ large enough. 
\label{th:no_eigenvalue}
\end{theorem}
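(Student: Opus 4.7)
The plan is to establish the absence of eigenvalues in $[a,b]$ by constructing a smooth cutoff function $\phi \in C_c^\infty(\mathbb{R})$ with support in $]a-\epsilon,b+\epsilon[$ and identically equal to $1$ on $[a,b]$. The number of eigenvalues of $\frac{1}{n}\bSigma_n\bSigma_n^*$ in $[a,b]$ is then dominated by $\tr \phi\bigl(\frac{1}{n}\bSigma_n\bSigma_n^*\bigr)$, and since this count is integer-valued, it suffices to prove that $\tr\phi\bigl(\frac{1}{n}\bSigma_n\bSigma_n^*\bigr) \to 0$ almost surely. To reduce the problem to resolvent estimates, I would invoke the Helffer--Sj\"ostrand formula, which represents $\phi(\lambda)$ as a two-dimensional integral of $(\lambda-z)^{-1}$ against $\overline{\partial}\widetilde{\phi}(z)$ for an almost-analytic extension $\widetilde{\phi}$ supported in a thin complex strip around the real line. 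This yields
\begin{equation*}
\tr \phi\Bigl(\tfrac{1}{n}\bSigma_n\bSigma_n^*\Bigr) = \frac{1}{\pi}\int_{\mathbb{C}} \overline{\partial}\widetilde{\phi}(z)\, \tr \bQ_N(z)\, dxdy.
\end{equation*}

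The core of the argument, following the Gaussian approach of \cite{LOU10}, consists in establishing two quantitative estimates that hold uniformly for $z$ in a neighborhood of $]a-\epsilon/2,b+\epsilon/2[$: a bias bound $\mathbb{E}[\tr \bQ_N(z)] - \tr \bT_N(z) = O(N^{-p})$ and a variance bound $\var(\tr \bQ_N(z)) = O(N^{-p})$ with $p\geq 2$. Both would be derived from Gaussian integration by parts (the Stein/IPP formula) and the Nash--Poincar\'e inequality applied to the Gaussian vectors $\bxi_i$; these tools are particularly powerful here because Assumption \ref{ass:boundedness} gives Gaussianity with uniformly bounded covariances. Integrating against $\overline{\partial}\widetilde{\phi}$, and exploiting the fact that $\widetilde{\phi}$ can be chosen so that $\overline{\partial}\widetilde{\phi}$ vanishes to arbitrarily high order on the real axis, one obtains $\mathbb{E}[\tr \phi(\tfrac{1}{n}\bSigma_n\bSigma_n^*)] - \int \phi\, d\mu_N = O(N^{-p})$ together with an analogous bound on the variance. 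Since $\phi$ vanishes on $\mathcal{S}_N$, the deterministic integral is zero, and Markov's inequality followed by Borel--Cantelli concludes the proof.

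The principal obstacle will be establishing uniform operator-norm bounds on $\bT_N(z)$, and then on the random resolvent $\bQ_N(z)$, for $z$ close to the real axis inside the spectral gap. For $\bT_N$, one must control $\frac{1}{n}\sum_i \bOmega_i/(1+\delta_i(z))$ and show that its spectrum stays away from $\mathrm{Re}(z)$ throughout the gap; this requires uniform control of the $\delta_i(z)$ away from $-1$ and from infinity, which is the delicate technical heart of the argument and relies on the fixed-point equations together with the separation hypothesis $]a-\epsilon,b+\epsilon[\cap \mathcal{S}_N = \emptyset$, in the same spirit as what Theorem \ref{th:zero} does near the origin. For the random resolvent $\bQ_N$, the standard remedy is to first restrict to a high-probability event on which the empirical spectrum remains confined to a small neighborhood of $\mathcal{S}_N$ (produced by preliminary moment bounds on $\tr \bQ_N(z)$ at imaginary parts of order $N^{-\alpha}$), and then carry out all estimates on that event. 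Once these uniform bounds are in place, the contour integration combining the bias and variance estimates reduces to a relatively routine, if tedious, application of Assumption \ref{ass:boundedness}.
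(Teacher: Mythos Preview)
Your high-level strategy---a smooth cutoff $\phi$ supported in the gap, representation of $\tr\phi(\frac{1}{n}\bSigma_n\bSigma_n^*)$ via Helffer--Sj\"ostrand, Gaussian integration by parts and Nash--Poincar\'e for bias and variance, then Markov and Borel--Cantelli---is exactly the route the paper takes (following \cite{LOU10} and the Haagerup--Thorbj{\o}rnsen device). Where you depart from the paper, and where your proposal goes astray, is in the paragraph you call the ``principal obstacle.''

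The paper does \emph{not} establish, and does not need, uniform bounds on $\bT_N(z)$ or $\bQ_N(z)$ for $z$ near the real axis in the gap. Its key analytic input (Proposition~\ref{prop:technical}) is the bias estimate
\[
\mathbb{E}\Bigl[\tfrac{1}{N}\tr\bQ_N(z)\Bigr]-\tfrac{1}{N}\tr\bT_N(z)=\tfrac{1}{N^2}\chi_N(z),\qquad |\chi_N(z)|\le K(|z|+C)^k\,P\bigl(|\Im z|^{-1}\bigr),
\]
which \emph{blows up polynomially} as $\Im z\to 0$. This is harmless precisely because, as you yourself note, the almost-analytic extension can be chosen so that $\overline{\partial}\widetilde\phi$ vanishes on the real axis to any prescribed finite order; the vanishing absorbs the polynomial singularity and the contour integral remains $O(N^{-2})$ (this is the content of the Haagerup--Thorbj{\o}rnsen lemma the paper invokes). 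The variance is handled the same way: Nash--Poincar\'e gives a bound on $\var\bigl(\tfrac{1}{N}\tr\psi(\frac{1}{n}\bSigma_n\bSigma_n^*)\bigr)$ directly for the smooth spectral function $\psi$, and one then feeds the function $\lambda\mapsto\lambda|\psi'(\lambda)|^2$ back into the bias lemma, with no need to control $\bQ_N(z)$ uniformly near the real line.

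Consequently, the machinery you outline in your last paragraph---uniform control of $\delta_i(z)$ in the gap, a preliminary high-probability confinement event, bootstrap-type arguments for $\bQ_N$---is unnecessary here, and attempting it would be both harder and partly circular (bounding $\|\bQ_N(z)\|$ on the real axis in $[a,b]$ is essentially the statement you are trying to prove). If you drop that paragraph and instead aim for a bias estimate with polynomial blow-up in $|\Im z|^{-1}$, your sketch becomes the paper's proof.
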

\begin{proof}
The following proposition will be crucial in order to prove Theorem \ref{th:no_eigenvalue}. It merely quantifies the error that we incur by replacing $\mathbb{E}\frac{1}{N}\tr {\bf Q}(z)$ by $\frac{1}{N}\tr {\bf T}_N(z)$. The proof is quite demanding and heavily relies on Gaussian calculus tools.  It will be detailed in the corpus of the paper, namely in section \ref{sec:technical}, since we believe that some intermediate results be of independent interest.
\begin{proposition}
$\forall z\in\mathbb{C}\backslash \mathbb{R}_{+}$, we have for $N$ large enough, 
$$
\mathbb{E}\left[\frac{1}{N}\tr{\bf {\bf Q}}(z)\right] =\frac{1}{N}\tr {\bf T}_N(z) +\frac{1}{N^2}\chi_N(z)
$$
with $\chi$ is analytic on $\mathbb{C}\backslash\mathbb{R}_{+}$ and satisfies:
\begin{equation}
\left|\chi_N(z)\right| \leq K\left(\left|z\right|+C\right)^k P\left(\left|\Im z\right|^{-1}\right)
\label{eq:technical}
\end{equation}
for each $z\in\mathbb{C}_{+}$ where $C,K$ are constants, $k$ is an integer independent of $N$ and $P$ is a polynomial with positive coefficients independent of $N$. 
\label{prop:technical}
\end{proposition}
Proposition \ref{prop:technical} will essentially serve to provide asymptotic approximates of linear statistics of the eigenvalues of the Gram matrix.  In fact, with the help of proposition \ref{prop:technical}, we prove the following result:
\begin{lemma}
Let $\phi$ be a compactly supported real-valued smooth function defined on $\mathbb{R}$, i.e, $\phi\in\mathcal{C}_c^{\infty}(\mathbb{R},\mathbb{R})$. Then \footnote{If ${\bf A}=\sum_{i=1}^N \lambda_i{\bf u}_i{\bf u}_i^{\mbox{\tiny H}}$ is an eigenvalue decomposition of ${\bf A}$, then $\phi({\bf A})=\sum_{i=1}^N \phi(\lambda_i){\bf u}_i{\bf u}_i^{\mbox{\tiny H}}$.},
\begin{equation}
\mathbb{E}\left[\phi\left(\frac{1}{N}\bSigma_n\bSigma_n^{*}\right)\right] -\int_{\mathcal{S}_N} \phi(\lambda) d\mu_N(\lambda)=\mathcal{O}\left(\frac{1}{N^2}\right).
\label{eq:difference}
\end{equation}
\end{lemma}
\begin{proof}
The proof is built around the use of the inversion lemma of ST. Recall that if $m$ is the ST of some finite measure $\mu$, then for any continuous real function $\phi$ with compact support in $\mathbb{R}$
$$
\int_{\mathbb{R}}\phi(\lambda)\mu(d\lambda)=\frac{1}{\pi} \Im\left(\lim_{y\downarrow 0}\int_{\mathbb{R}} \phi(x)m(x+\imath y)dx\right).
$$
We therefore have:
\begin{small}
\begin{align*}
&\mathbb{E}\left[\frac{1}{N}\tr\phi(\frac{1}{n}\bSigma_n\bSigma_n^{\mbox{\tiny H}})\right]=\frac{1}{\pi} \Im\left(\lim_{y\downarrow 0}\int_{\mathbb{R}} \phi(x)\mathbb{E}\left[\frac{1}{N}\tr {\bf Q}\left(x+\imath y\right)\right]dx\right) \\
&\int_{\mathcal{S}_N} \phi(\lambda)d\mu_N(\lambda)=\frac{1}{\pi} \Im\left(\lim_{y\downarrow 0}\int_{\mathbb{R}} \phi(x)\mathbb{E}\left[\frac{1}{N}\tr {\bf T}_N\left(x+\imath y\right)\right]dx\right).
\end{align*}
\end{small}
By proposition \ref{prop:technical}, we get:
\begin{align*}
&\mathbb{E}\left[\frac{1}{N}\tr\phi(\frac{1}{n}\bSigma_n\bSigma_n^{\mbox{\tiny H}})\right] - \int_{\mathcal{S}_N} \phi(\lambda)d\mu_N(\lambda) \\
&=\frac{1}{N^2}\frac{1}{\pi} \lim_{y\downarrow 0} \Im\left[\int_{\mathbb{R}_{+}} \phi(x)\chi_N(x+\imath y) dx\right].
\end{align*}
Since the function $\chi_N(z)$ satisfies \eqref{eq:technical}, Theorem 6.2 in \cite{haagerup} implies that:
$$
\lim\sup_{y\downarrow 0}\left|\int_{\mathbb{R}} \phi(x)\chi_N(x+\imath y)dx\right| \leq C<+\infty.
$$
where $C$ is a constant independent of $N$, thereby establishing \eqref{eq:difference}.
\end{proof}
We return now to the proof of Theorem \ref{th:no_eigenvalue}. With the above results at hand, Theorem \ref{th:no_eigenvalue} can be shown along the same lines as the proof of Theorem 3 in \cite{LOU10}. The details are  provided in the sequel for sake of completeness. 
Consider $\psi\in\mathcal{C}_c^{\infty}(\mathbb{R},\mathbb{R})$ satisfying $0\leq \psi \leq 1$ and:
$$
\psi(\lambda)=\left\{\begin{array}{lll}
1 & \textnormal{for} & \lambda\in\left[a,b\right] \\
0& \textnormal{for} & \lambda\in\mathbb{R}\backslash\left]a-\epsilon,b+\epsilon\right[.
\end{array}\right.
$$ 
For $N$ large enough, function $\psi$ is zero in the support $\mathcal{S}_N$. Therefore,
$$
\mathbb{E}\left[\frac{1}{N}\psi\left(\frac{1}{n}\bSigma_n\bSigma_n^{\mbox{\tiny H}}\right)\right] =\mathcal{O}\left(\frac{1}{N^2}\right).
$$
We need also to prove that the variance of $\frac{1}{N}\psi(\frac{1}{n}\bSigma_n\bSigma_n^{\mbox{\tiny H}})$ is of order $\frac{1}{N^4}$:
\begin{equation}
{\rm var}\left[\frac{1}{N}\psi\left(\frac{1}{n}\bSigma_n\bSigma_n^{\mbox{\tiny H}}\right)\right]=\mathcal{O}\left(\frac{1}{N^4}\right).
\label{eq:variance}
\end{equation}
To establish \eqref{eq:variance}, it suffices to resort to the Nash-Poincar\'e inequality which is stated in Lemma \ref{lemma:nash} of the next section.  Applying Lemma \ref{lemma:nash}, we obtain:
\begin{align}
&{\rm var}\left(\frac{1}{N}\tr \psi\left(\frac{1}{n}\bSigma_n\bSigma_n^{\mbox{\tiny H}}\right)\right)\leq\nonumber\\
&\sum_{k=1}^n \sum_{s=1}^N\sum_{r=1}^N \frac{\partial \frac{1}{N}\tr \psi\left(\frac{1}{n}\bSigma_n\bSigma_n^{\mbox{\tiny H}}\right)}{\partial \xi_{s,k}} \left[\bOmega_k\right]_{s,r}\frac{\left[\partial \frac{1}{N}\tr \psi\left(\frac{1}{n}\bSigma_n\bSigma_n^{\mbox{\tiny H}}\right)\right]^*}{\partial \xi_{r,k}}\nonumber\\
&+\sum_{k=1}^n \sum_{s=1}^N\sum_{r=1}^N \frac{\partial \frac{1}{N}\tr \psi\left(\frac{1}{n}\bSigma_n\bSigma_n^{\mbox{\tiny H}}\right)}{\partial \xi_{s,k}^*} \left[\bOmega_k\right]_{s,r}\frac{\left[\partial \frac{1}{N}\tr \psi\left(\frac{1}{n}\bSigma_n\bSigma_n^{\mbox{\tiny H}}\right)\right]^*}{\partial \xi_{r,k}^*}.\label{eq:nash}
\end{align}
By Lemma 4.6 in \cite{haagerup}, we have:
\begin{align}
\frac{\partial \left[\frac{1}{N}\tr \psi\left(\frac{1}{n}\bSigma_n\bSigma_n^{\mbox{\tiny H}}\right)\right]}{\partial \xi_{s,k}}&=\left[\frac{1}{Nn}\bSigma_n^{\mbox{\tiny H}}\psi^{'}\left(\frac{1}{n}\bSigma_n\bSigma_n^{\mbox{\tiny H}}\right)\right]_{k,s} \label{eq:ineq_1}\\
\frac{\partial \left[\frac{1}{N}\tr \psi\left(\frac{1}{n}\bSigma_n\bSigma_n^{\mbox{\tiny H}}\right)\right]}{\partial \xi_{s,k}^*}&=\left[\frac{1}{Nn}\psi^{'}\left(\frac{1}{n}\bSigma_n\bSigma_n^{\mbox{\tiny H}}\right)\bSigma_n\right]_{s,k}. \label{eq:ineq_2}
\end{align}
Plugging \eqref{eq:ineq_1} and \eqref{eq:ineq_2} into \eqref{eq:nash}, we get:
\begin{small}
\begin{align*}
&{\rm var}\left[\frac{1}{N}\psi\left(\frac{1}{n}\bSigma_n\bSigma_n^{\mbox{\tiny H}}\right)\right] \\
&\leq \sum_{k=1}^n \frac{2}{N^2n^2}\mathbb{E}\left[\tr \left(\bSigma_n\bSigma_n^{\mbox{\tiny H}} \psi^{'}\left(\frac{1}{n}\bSigma_n\bSigma_n^{\mbox{\tiny H}}\right) \bOmega_k \psi^{'}\left(\frac{1}{n}\bSigma_n\bSigma_n^{\mbox{\tiny H}}\right)\right)\right]\\
&\stackrel{(a)}{\leq} w_{\rm max} \sum_{k=1}^n \frac{2}{N^2n^2}\mathbb{E}\left[\tr \left(  \psi^{'}\left(\frac{1}{n}\bSigma_n\bSigma_n^{\mbox{\tiny H}}\right)\bSigma_n\bSigma_n^{\mbox{\tiny H}} \psi^{'}\left(\frac{1}{n}\bSigma_n\bSigma_n^{\mbox{\tiny H}}\right) \right)\right],
\end{align*}
\end{small}
where $(a)$ follows from the fact that $\tr{\bf A}{\bf B} \leq \|{\bf A}\|\tr {\bf B}$ for ${\bf A}$ hermitian and   ${\bf B}$ positive definite matrix. 
Consider $h:\lambda\mapsto \lambda \left|\psi^{'}(\lambda)\right|^2$. Clearly $h$ belongs to $\mathcal{C}_c^{\infty}(\mathbb{R},\mathbb{R})$. We therefore have:
\begin{align*}
&\mathbb{E}\left[\frac{1}{n}\tr \left(  \psi^{'}\left(\frac{1}{n}\bSigma_n\bSigma_n^{\mbox{\tiny H}}\right)\bSigma_n\bSigma_n^{\mbox{\tiny H}} \psi^{'}\left(\frac{1}{n}\bSigma_n\bSigma_n^{\mbox{\tiny H}}\right) \right)\right] \\
&=\int_{\mathcal{S}_N} h(\lambda)d\mu_N(\lambda)+\mathcal{O}\left(\frac{1}{N^2}\right).
\end{align*}
It is clear that for $N$ large enough, $\int_{\mathcal{S}_N} h(\lambda)d\mu_N(\lambda)=0$, thus proving:
$$
{\rm var}\left(\frac{1}{N}\psi\left(\frac{1}{n}\bSigma_n\bSigma_n^{\mbox{\tiny H}}\right)\right) =\mathcal{O}\left(\frac{1}{N^2}\right). 
$$
Applying the classical Markov inequality, we obtain:
\begin{align*}
&\mathbb{P}\left(\frac{1}{N}\tr \psi\left(\frac{1}{n}\bSigma_n\bSigma_n^{\mbox{\tiny H}}\right)\right) \leq N^{8/3} \mathbb{E}\left[\left|\frac{1}{N}\tr \psi\left(\frac{1}{n}\bSigma_n\bSigma_n^{\mbox{\tiny H}}\right)\right|^2\right] \\
&= N^{8/3}\left(\left|\mathbb{E}\left[\frac{1}{N}\tr \psi\left(\frac{1}{n}\bSigma_n\bSigma_n^{\mbox{\tiny H}}\right)\right]\right|^2\right.\\
&\left.+{\rm var}\left(\frac{1}{N}\tr \psi\left(\frac{1}{n}\bSigma_n\bSigma_n^{\mbox{\tiny H}}\right)\right)\right) \\
&=\mathcal{O}\left(\frac{1}{N^{4/3}}\right).
\end{align*} 
Thus, by Borel-Cantelli lemma, for $N$ large enough,
$$
\frac{1}{N}\tr \psi\left(\frac{1}{n}\bSigma_n\bSigma_n^{\mbox{\tiny H}}\right) \leq \frac{1}{N^{4/3}},
$$
or equivalently,
$$
\tr \psi\left(\frac{1}{n}\bSigma_n\bSigma_n^{\mbox{\tiny H}}\right)\leq \frac{1}{N^{1/3}}
$$
By definition of function $\psi$, the number of eigenvalues of the Gram matrix $\frac{1}{n}\bSigma_n\bSigma_n^{\mbox{\tiny H}}$ that lies in the in the interval $\left[a,b\right]$ is upper-bounded by $\tr \psi\left(\frac{1}{n}\bSigma_n\bSigma_n^{\mbox{\tiny H}}\right)$, and is therefore less than $\frac{1}{N^{1/3}}$ with probability $1$. Since this number has to be an integer, we deduce that it is zero for $N$ large enough. As a consequence, there is no eigenvalue in $\left[a,b\right]$ for $N$ large enough.

\end{proof}

Gathering the results of Theorem \ref{th:no_eigenvalue} and Theorem \ref{th:zero}, we get:
\begin{corollary}
Assume the setting of Theorem \ref{th:zero}. Then, for $N$ large enough, the smallest eigenvalue of $\bSigma_n\bSigma_n^*$ is bounded away from zero.
\end{corollary}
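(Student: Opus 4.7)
The plan is to deduce the corollary by directly composing the two main theorems of the paper. The key observation is that Theorem \ref{th:zero} provides a deterministic gap between $0$ and the support $\mathcal{S}_N$ which is uniform in $N$ (for $N$ large enough), and Theorem \ref{th:no_eigenvalue} lifts such a deterministic gap into an almost sure exclusion statement at the level of the actual eigenvalues of $\frac{1}{n}\bSigma_n\bSigma_n^*$. So the corollary is essentially a packaging step.

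Concretely, I would first invoke Theorem \ref{th:zero} to produce an $\epsilon>0$, independent of $N$, such that $[0,\epsilon]\cap\mathcal{S}_N=\emptyset$ for all sufficiently large $N$. Next, I would set $a=0$, $b=\epsilon/2$, and $\epsilon'=\epsilon/4$, so that the enlarged interval $\,]a-\epsilon',b+\epsilon'[\,=\,]-\epsilon/4,\,3\epsilon/4[$ is contained in $\,]-\infty,\epsilon[$. Since $\mathcal{S}_N\subset\mathbb{R}_+$ and $\mathcal{S}_N\cap[0,\epsilon]=\emptyset$, this enlarged interval is disjoint from $\mathcal{S}_N$ for $N$ large. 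Theorem \ref{th:no_eigenvalue} then applies and yields, with probability one, that no eigenvalue of $\frac{1}{n}\bSigma_n\bSigma_n^*$ lies in $[0,\epsilon/2]$ for all $N$ large enough.

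Because the eigenvalues of the Gram matrix are non-negative, the smallest one $\lambda_1$ must therefore satisfy $\lambda_1>\epsilon/2$ almost surely for all sufficiently large $N$. Multiplying by $n$ (noting that $n\geq N\to\infty$) if one prefers the statement for $\bSigma_n\bSigma_n^*$ rather than $\frac{1}{n}\bSigma_n\bSigma_n^*$, one gets the stronger conclusion; in either normalization the smallest eigenvalue is bounded away from zero.

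There is no real obstacle here since both technical ingredients are already in place: Theorem \ref{th:zero} is the deterministic support statement and Theorem \ref{th:no_eigenvalue} is the almost sure no-eigenvalue-outside-support statement. The only point requiring mild care is the uniformity in $N$ of the gap $\epsilon$, which is implicit in the formulation of Theorem \ref{th:zero}; once this is granted, the corollary follows by the two-line chaining argument above.
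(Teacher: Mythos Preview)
Your proposal is correct and matches the paper's approach exactly: the paper simply states that the corollary is obtained by ``gathering the results of Theorem~\ref{th:no_eigenvalue} and Theorem~\ref{th:zero}'', and your explicit choice of $a=0$, $b=\epsilon/2$, $\epsilon'=\epsilon/4$ merely spells out the obvious chaining. The only minor remark is that your parenthetical about multiplying by $n$ is unnecessary padding---the relevant statement is already at the level of $\frac{1}{n}\bSigma_n\bSigma_n^*$.
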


\section{Approximation rule}
\label{sec:technical}
This section aims at showing the approximation in proposition \ref{prop:technical} stating that:
$$
\mathbb{E}\left[\frac{1}{N}\tr {\bf Q}(z)\right]=\frac{1}{N}\tr {\bf T}_N(z)+\frac{1}{N^2}\chi_N(z)
$$
for $N$ large enough, where $\chi$ is analytic on $\mathbb{C}\backslash \mathbb{R}_{+}$ and satisfies inequality \eqref{eq:technical}.

As far as generally correlated Gaussian matrices are concerned, the convergence of $\frac{1}{N}\tr {\bf Q}_N(z)$ to $\frac{1}{N}\tr {\bf T}_N(z)$ has been shown to hold in the almost sure sense, \cite{WAG10}. This result directly implies that the empirical eigenvalue distribution converges weakly to a measure $\mu_N$ which is characterized by its stieltjes transform $m_N(z)=\frac{1}{N}\tr {\bf T}_N(z)$. Its importance lies in that it gives us insights on the proportion of eigenvalues falling in any interval. But, it does not rule out the possibility of a $o(n)$ proportion of eigenvalues lying outside the limiting support of $\mu_N$. 
As it has been shown above, a sufficient condition that can eliminate this possibility is constituted by the statement of proposition \ref{prop:technical}. This statement is already known to hold for other models, mainly the non-centered Gaussian model \cite{LOU10}. Its proof for the model of generally correlated Gaussian matrices has not been carried out, to the best of the authors' knowledge. 

While the proof of proposition \ref{prop:technical} relies on the standard use of Gaussian calculus tools, several adaptations to the specificity of the random matrix model are far from being immediate. To facilitate the understanding of the highly technical proof, we start by introducing the main key steps. In order to control the difference $\frac{1}{N}\mathbb{E}\tr{\bf Q}_N(z)-\frac{1}{N}\tr {\bf T}_N(z)$, we need to introduce, similar to previous works \cite{HAC06}, an intermediate deterministic matrix denoted by ${\bf R}_N(z)$ and which writes as:
$$
{\bf R}_N(z)=\left(\frac{1}{n}\sum_{k=1}^n \frac{\bOmega_k}{1+\alpha_k(z)}-z{\bf I}_N\right)^{-1},
$$
where $\alpha_k(z)=\frac{1}{n}\tr \bOmega_k\mathbb{E}{\bf Q}(z)$, $k=1,\cdots,n$.
With matrix ${\bf R}_N(z)$ at hand, we decompose the difference $\frac{1}{N}\mathbb{E}\tr{\bf Q}_N(z)-\frac{1}{N}\tr {\bf T}_N(z)$ as:
\begin{align*}
\frac{1}{N}\mathbb{E}\tr{\bf Q}_N(z)-\frac{1}{N}\tr {\bf T}_N(z)&=\frac{1}{N}\mathbb{E}\tr{\bf Q}_N(z)-\frac{1}{N}\tr {\bf R}_N(z) \\
&+\frac{1}{N}\tr{\bf R}_N(z)-\frac{1}{N}\tr {\bf T}_N(z)\\
&\triangleq \frac{1}{N^2}\chi_1(z)+\frac{1}{N^2}\chi_2(z).
\end{align*}
This decomposition is quite standard in random matrix theory. While the direct control of the difference $\frac{1}{N}\mathbb{E}\tr{\bf Q}_N(z)-\frac{1}{N}\tr {\bf T}_N(z)$ is complicated, much can be inferred from both differences $\frac{1}{N}\mathbb{E}\tr{\bf Q}_N(z)-\frac{1}{N}\tr {\bf R}_N(z) $ and $\frac{1}{N}\tr{\bf R}_N(z)-\frac{1}{N}\tr {\bf T}_N(z)$.
In order to prove proposition \ref{prop:technical}, it suffices to show that:
$$
\left|\chi_i(z)\right| \leq \left(|z|+C_i\right)^{k_i} P_i\left(\left|\Im z\right|^{-1}\right), i=1,2,
$$
where $C_i,i=1,2$ are positive constants,  $k_i,i=1,2$ are positive integers and $P_i,i=1,2$ are polynomial with positive coefficients independent of $N$.
In addition to ${\bf R}_N(z)$, we will need to introduce the following deterministic quantities:
\begin{align*}
\tilde{r}_i&=-\frac{1}{z(1+\alpha_i(z))}, i=1,\cdots,n \\
\tilde{\bf R}_N&={\rm diag}\left(\tilde{r}_1,\cdots,\tilde{r}_n\right).
\end{align*}
It can be easily shown along the same lines of Proposition 5.1 of \cite{HAC07} that matrix valued functions ${\bf R}_N(z)$ and $\tilde{\bf R}_N(z)$ are holomorphic in $\mathbb{C}\backslash{\mathbb{R}_+}$ and coincide with the Stieltjes transforms of positive matrix valued probability measures carried by $\mathbb{R}_{+}$, the mass of which are equal to ${\bf I}$. Their spectral norms are thus bounded by $\left|\Im z\right|^{-1}$. In particular, we have:
$$
{\rm max}\left(\|\tilde{\bf R}_N\|,\|{\bf R}_N\|\right) \leq \left|\Im z\right|^{-1}.
$$
With these quantities at hand, we are now in position to sequentially control the terms $\chi_1(z)$ and $\chi_2(z)$.
\subsection{Control of $\chi_1(z)$}
The control of $\chi_1(z)$ will extensively rely on the use of Gaussian calculus tools, namely the Integration by Part formulae and the Nash-Poincar\'e inequality. Before delving into the core of the proof, we shall recall these tools.
\begin{lemma}[Integration by Part Lemma]
Let ${\bf x}=\left[x_1,\cdots,x_N\right]^{\mbox{\tiny T}}$ a complex Gaussian vector such that $\mathbb{E}\left[{\bf x}\right]=0$, $\mathbb{E}\left[{\bf x}{\bf x}^{\mbox{\tiny T}}\right]=0$ and $\mathbb{E}\left[{\bf x}{\bf x}^*\right]={\bf R}$. If $\Gamma:{\bf x}\mapsto \Gamma({\bf x})$ is a $\mathcal{C}^{1}$ complex function polynomially bounded together with its derivatives, then:
$$
\mathbb{E}\left[x_p\Gamma(x)\right]=\sum_{m=1}^N \left[{\bf R}\right]_{p,m}\mathbb{E}\left[\frac{\partial \Gamma({\bf x})}{\partial x_m^*}\right]
$$
\label{lemma:integration_part}
\end{lemma}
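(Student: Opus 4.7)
The plan is to exploit the circular symmetry encoded in $\mathbb{E}[\mathbf{x}\mathbf{x}^{\mbox{\tiny T}}]=0$ to reduce to the case of i.i.d.\ standard proper complex Gaussians, and then lift the resulting identity back via the Wirtinger chain rule. First, factor $\mathbf{R}=\mathbf{A}\mathbf{A}^{*}$ (for instance $\mathbf{A}=\mathbf{R}^{1/2}$; if $\mathbf{R}$ is singular the same argument works on the range of $\mathbf{A}$) and write $\mathbf{x}=\mathbf{A}\mathbf{z}$, where $\mathbf{z}$ has i.i.d.\ standard proper complex Gaussian entries. Setting $\widetilde{\Gamma}(\mathbf{z})\triangleq\Gamma(\mathbf{A}\mathbf{z})$, which is still $\mathcal{C}^{1}$ and polynomially bounded together with its derivatives, we have $\mathbb{E}[x_{p}\,\Gamma(\mathbf{x})]=\sum_{k}[\mathbf{A}]_{p,k}\,\mathbb{E}[z_{k}\,\widetilde{\Gamma}(\mathbf{z})]$.

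The second step is to establish the scalar identity $\mathbb{E}[z_{k}\,\widetilde{\Gamma}(\mathbf{z})]=\mathbb{E}[\partial\widetilde{\Gamma}/\partial z_{k}^{*}]$. Writing $z_{k}=u_{k}+\imath v_{k}$, with $u_{k},v_{k}$ independent real centered Gaussians of variance $1/2$, and recalling the Wirtinger convention $\partial/\partial z_{k}^{*}=\tfrac{1}{2}(\partial/\partial u_{k}+\imath\,\partial/\partial v_{k})$, we split $\mathbb{E}[z_{k}\,\widetilde{\Gamma}]=\mathbb{E}[u_{k}\,\widetilde{\Gamma}]+\imath\,\mathbb{E}[v_{k}\,\widetilde{\Gamma}]$ and apply the classical real Stein/integration-by-parts identity coordinatewise, giving $\mathbb{E}[u_{k}\,\widetilde{\Gamma}]=\tfrac{1}{2}\mathbb{E}[\partial\widetilde{\Gamma}/\partial u_{k}]$ and likewise for $v_{k}$. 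The polynomial growth of $\widetilde{\Gamma}$ and its partial derivatives, combined with the super-polynomial Gaussian tail decay, justifies dominated convergence so that boundary terms vanish and all integrals are absolutely convergent.

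To conclude, apply the Wirtinger chain rule to $\widetilde{\Gamma}(\mathbf{z})=\Gamma(\mathbf{A}\mathbf{z})$. Observing that each $x_{m}=\sum_{\ell}[\mathbf{A}]_{m,\ell}z_{\ell}$ depends holomorphically on $\mathbf{z}$ (so $\partial x_{m}/\partial z_{k}^{*}=0$) while $x_{m}^{*}=\sum_{\ell}\overline{[\mathbf{A}]_{m,\ell}}\,z_{\ell}^{*}$, we obtain
$$
\frac{\partial\widetilde{\Gamma}}{\partial z_{k}^{*}}=\sum_{m}\overline{[\mathbf{A}]_{m,k}}\,\frac{\partial\Gamma}{\partial x_{m}^{*}}=\sum_{m}[\mathbf{A}^{*}]_{k,m}\,\frac{\partial\Gamma}{\partial x_{m}^{*}}.
$$
Plugging this into $\mathbb{E}[x_{p}\,\Gamma(\mathbf{x})]=\sum_{k}[\mathbf{A}]_{p,k}\mathbb{E}[\partial\widetilde{\Gamma}/\partial z_{k}^{*}]$ and using $\sum_{k}[\mathbf{A}]_{p,k}[\mathbf{A}^{*}]_{k,m}=[\mathbf{A}\mathbf{A}^{*}]_{p,m}=[\mathbf{R}]_{p,m}$ yields the announced formula.

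The only genuinely delicate point is bookkeeping around the integration by parts: one must justify differentiation under the integral sign and vanishing of boundary terms at infinity, which is exactly where the polynomial-boundedness hypothesis on $\Gamma$ and its derivatives enters (the Gaussian density kills any polynomial tail). The pseudo-covariance condition $\mathbb{E}[\mathbf{x}\mathbf{x}^{\mbox{\tiny T}}]=0$ enters implicitly in allowing us to choose $\mathbf{z}$ to be circular, ensuring that the Wirtinger partials $\partial/\partial z_{k}$ and $\partial/\partial z_{k}^{*}$ decouple and that no unwanted term in $\mathbb{E}[\partial\Gamma/\partial x_{m}]$ appears on the right-hand side.
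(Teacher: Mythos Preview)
Your argument is correct: reducing to i.i.d.\ standard circular Gaussians via $\mathbf{x}=\mathbf{A}\mathbf{z}$, applying the real Stein identity to the real and imaginary parts, and then pulling the Wirtinger chain rule back through $\mathbf{A}$ is the standard way to obtain this formula, and you have handled the bookkeeping (vanishing boundary terms under polynomial growth, the role of $\mathbb{E}[\mathbf{x}\mathbf{x}^{\mbox{\tiny T}}]=0$ in ensuring circularity so that no $\partial\Gamma/\partial x_m$ term appears) correctly.

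As for comparison with the paper: the paper does not prove this lemma at all. It is stated as a known tool from Gaussian calculus (it is classical; see e.g.\ Pastur--Shcherbina or the references in \cite{HAC07}) and is simply invoked. So there is nothing to compare against---your proposal supplies a proof where the paper gives none.
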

\begin{lemma}[Nash-Poincar\'e Inequality]
Let ${\bf x}=\left[x_1,\cdots,x_N\right]^{\mbox{\tiny T}}$ a complex Gaussian vector such that $\mathbb{E}\left[{\bf x}\right]=0$, $\mathbb{E}\left[{\bf x}{\bf x}^{\mbox{\tiny T}}\right]=0$ and $\mathbb{E}\left[{\bf x}{\bf x}^*\right]={\bf R}$. If $\Gamma:{\bf x}\mapsto \Gamma({\bf x})$ is a $\mathcal{C}^{1}$ complex function polynomially bounded together with its derivatives, then, noting $\nabla_{x}\Gamma=\left[\frac{\partial \Gamma}{\partial x_1},\cdots,\frac{\partial \Gamma}{\partial x_M}\right]^{\mbox{\tiny T}}$ and  $\nabla_{x^*}\Gamma=\left[\frac{\partial \Gamma}{\partial x_1^*},\cdots,\frac{\partial \Gamma}{\partial x_M^*}\right]^{\mbox{\tiny T}}$,
\begin{align*}
	{\rm var}\left(\Gamma(x)\right) &\leq \mathbb{E}\left[\nabla_{\bf x}\Gamma(x)^{\mbox{\tiny T}}{\bf R} \ \ {\left(\nabla_{\bf x}\Gamma(x)\right)}^*\right] \\
	&+\mathbb{E}\left[\left(\nabla_{{\bf x}^*}\Gamma(x)\right)^{*}{\bf R} \nabla_{{\bf x}^*}\Gamma(x)\right].
\end{align*}
\label{lemma:nash}
\end{lemma}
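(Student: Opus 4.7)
My plan is to reduce the inequality to the standard complex Gaussian case and then invoke the spectral gap of the associated Ornstein--Uhlenbeck semigroup (equivalently, expand in a Hermite basis). Concretely, I would start by setting $\bx=\bR^{1/2}\mathbf{y}$, where $\mathbf{y}$ is a standard circular complex Gaussian vector (covariance $\mathbf{I}$, pseudo-covariance $0$); such an $\mathbf{R}^{1/2}$ exists since $\bR$ is Hermitian positive semidefinite, and one can reduce to the strictly positive definite case by a limiting argument. By the Wirtinger chain rule, $\nabla_{\mathbf{y}}\Gamma=(\bR^{1/2})^{\mbox{\tiny T}}\nabla_{\bx}\Gamma$ and $\nabla_{\mathbf{y}^*}\Gamma=(\bR^{1/2})^{*}\nabla_{\bx^*}\Gamma$, so the right-hand side of the claimed inequality in the $\bx$ variables equals $\mathbb{E}\|\nabla_{\mathbf{y}}\Gamma\|^2+\mathbb{E}\|\nabla_{\mathbf{y}^*}\Gamma\|^2$ in the $\mathbf{y}$ variables. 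Thus it suffices to prove the inequality for $\bR=\mathbf{I}$.

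Next I would invoke the Hermite expansion for the standard complex Gaussian measure on $\C^N$. The set of products $H_{\alpha}(\mathbf{y})\overline{H_{\beta}(\mathbf{y})}$, indexed by multi-indices $\alpha,\beta\in \mathbb{N}^N$ and normalized appropriately, forms an orthonormal basis of $L^{2}(\gamma_N)$. Writing $\Gamma=\sum_{\alpha,\beta}c_{\alpha,\beta}H_{\alpha}H_{\beta}^{*}$, Parseval gives $\mathrm{var}(\Gamma)=\sum_{(\alpha,\beta)\neq(0,0)}|c_{\alpha,\beta}|^{2}$. The key algebraic facts are the raising/lowering identities $\partial_{y_{i}}H_{\alpha}=\sqrt{\alpha_{i}}\,H_{\alpha-e_{i}}$ and $\partial_{y_{i}^{*}}H_{\alpha}=0$ (holomorphicity), which, after applying Parseval componentwise, give
$$
\mathbb{E}\|\nabla_{\mathbf{y}}\Gamma\|^{2}+\mathbb{E}\|\nabla_{\mathbf{y}^{*}}\Gamma\|^{2}=\sum_{\alpha,\beta}(|\alpha|+|\beta|)|c_{\alpha,\beta}|^{2}.
$$
Since every nonzero Hermite mode satisfies $|\alpha|+|\beta|\geq 1$, the right-hand side dominates $\mathrm{var}(\Gamma)$, which is exactly the desired inequality for the identity covariance case. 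Unwinding the substitution yields the result in general.

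As a lighter alternative if one prefers to avoid explicit Hermite manipulations, I would use the Ornstein--Uhlenbeck semigroup $(P_t)$ with invariant measure $\gamma_N$. Writing $\mathrm{var}(\Gamma)=\int_0^\infty \tfrac{d}{dt}(-\mathbb{E}[\Gamma\, P_t\Gamma])\,dt$ and using the Mehler-type representation together with integration by parts (Lemma~\ref{lemma:integration_part}) to move derivatives onto $\Gamma$, one obtains the Dirichlet-form identity $\mathrm{var}(\Gamma)=\int_0^\infty e^{-t}\mathbb{E}[\nabla_{\mathbf{y}}\Gamma\cdot(\nabla_{\mathbf{y}}P_t\Gamma)^{*}+\nabla_{\mathbf{y}^{*}}\Gamma\cdot(\nabla_{\mathbf{y}^{*}}P_t\Gamma)^{*}]\,dt$, and a Cauchy--Schwarz bound combined with the contractivity $\|\nabla P_t\Gamma\|_2\leq e^{-t}\|\nabla\Gamma\|_2$ closes the argument.

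The main obstacle I anticipate is bookkeeping the Wirtinger calculus correctly: distinguishing the two gradients $\nabla_{\mathbf{y}}$ and $\nabla_{\mathbf{y}^{*}}$, transforming them consistently under $\bx=\bR^{1/2}\mathbf{y}$, and tracking the fact that both derivative families act nontrivially on $H_\alpha \overline{H_\beta}$ only through their respective (anti)holomorphic factor. A secondary subtlety is the polynomial-boundedness hypothesis on $\Gamma$ and its derivatives, which is needed both to justify the Hermite expansion in $L^2$ and to legitimize the integration by parts used in Lemma~\ref{lemma:integration_part}; handling this should be routine but must be mentioned.
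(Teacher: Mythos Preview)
The paper does not actually prove Lemma~\ref{lemma:nash}; it is stated there as a standard tool from the Gaussian calculus literature (alongside the integration-by-parts formula of Lemma~\ref{lemma:integration_part}) and simply invoked where needed. So there is no ``paper's own proof'' to compare against.

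Your proposal is a correct and standard way to establish the inequality. The reduction $\bx=\bR^{1/2}\mathbf{y}$ with Hermitian $\bR^{1/2}$ does transform the right-hand side into $\mathbb{E}\|\nabla_{\mathbf y}\Gamma\|^2+\mathbb{E}\|\nabla_{\mathbf y^*}\Gamma\|^2$ exactly as you claim (the Wirtinger chain rule gives $\nabla_{\mathbf y}\Gamma=(\bR^{1/2})^{\mbox{\tiny T}}\nabla_{\bx}\Gamma$, and Hermitianity of $\bR^{1/2}$ turns $\bR^{1/2}(\bR^{1/2})^*$ back into $\bR$). The Hermite/chaos expansion in the identity-covariance case is the textbook route: the tensorized complex Hermite functions diagonalize both the variance and the Dirichlet form, and the spectral gap of the Ornstein--Uhlenbeck generator is exactly $1$, which is the content of the inequality. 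Your alternative semigroup argument is equally valid and is in fact how many references (e.g.\ Pastur--Shcherbina, Hachem--Loubaton--Najim) present it. The only cosmetic point: in your chain-rule display you might write $(\bR^{1/2})^{\mbox{\tiny T}}$ acting on $\nabla_{\bx}\Gamma$ and then note that $\|(\bR^{1/2})^{\mbox{\tiny T}}v\|^2=v^{\mbox{\tiny T}}\bR\,\overline v$ by Hermitianity, to make the match with the lemma's right-hand side explicit.
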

Applying Lemma \ref{lemma:nash}, we will thus get:
\begin{multline}
{\rm var}\left(\Gamma(\xi_1,\cdots,\xi_n)\right)\leq 
\sum_{k=1}^n\sum_{s=1}^N\sum_{r=1}^N \mathbb{E}\left[\frac{\partial \Gamma}{\partial \xi_{s,k}}\left[\Omega_k\right]_{s,r}\frac{\partial {\Gamma}^*}{\partial \xi_{r,k}}\right] \\
+\sum_{k=1}^n \sum_{s=1}^N\sum_{r=1}^N\mathbb{E}\left[ \frac{\partial {\Gamma}^*}{\partial \xi_{s,k}^*} \left[\Omega_k\right]_{s,r}\frac{\partial \Gamma}{\partial \xi_{r,k}^*}\right].
\label{eq:variance}
\end{multline}
The application of these tools will require us to compute differentials of the resolvent matrix with respect to the entries of $\bSigma_n$. In particular, we will need in the sequel, the following differentiation formulas:
\begin{align}
\frac{\partial \left[{\bf Q}\right]_{\ell,p}}{\partial {\xi_{m,k}^*}}&=-\frac{1}{n}\frac{\left[{\bf Q}\partial \bSigma_n\bSigma_n^* {\bf Q}\right]_{\ell,p}}{\partial {\xi_{m,k}^*}}\nonumber\\
&=-\frac{1}{n}\left[{\bf Q}\bxi_ke_m^{\mbox{\tiny T}}{\bf Q}\right]_{\ell,p} \nonumber\\
&=-\frac{1}{n}\left[{\bf Q}\bxi_k\right]_\ell\left[{\bf Q}\right]_{m,p}. \label{eq:diff_1}
\end{align}
Moreover, we also have:
\begin{align}
\frac{\partial \left[{\bf Q}\right]_{\ell,p}}{\partial \xi_{s,k}} &=-\frac{1}{n}\left[{\bf Q}\right]_{\ell,s}\left[\bxi_k^*{\bf Q}\right]_p.
\label{eq:diff_2}
\end{align}
The use of the integration by part lemma along with the above differential formulae will allow us to establish the following lemma:
\begin{lemma}
Let $\beta_i,i=1,\cdots,n$ be given by $\beta_i=\frac{1}{n}\tr \bOmega_i{\bf Q}(z)$.
For each $z\in\mathbb{C}_{+}$ and any deterministic matrix ${\bf A}$, it holds that:
$$
\mathbb{E}\tr {\bf A}{\bf Q}(z)=\tr {\bf A}{\bf R}(z)-z\mathbb{E}\tr \frac{{\bf A}{\bf Q}\bSigma_n\tilde{\bf R}{\bf B}\bSigma_n^*{\bf R}}{n}
$$
where ${\bf B}={\rm diag}\left(\stackrel{o}{\beta}_1,\cdots,\stackrel{o}{\beta}_n\right)$ with$$
\stackrel{o}{\beta}_i =\beta_i-\alpha_i.
$$
\label{lemma:AQ}
\end{lemma}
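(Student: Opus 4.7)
My plan is to exploit the resolvent comparison between ${\bf Q}(z)$ and ${\bf R}(z)$, which naturally produces a $\bSigma_n\bSigma_n^*$ contribution, and then to cancel the non-trivial part of that contribution via a single Gaussian integration-by-parts (IBP) identity on a scalar quadratic form. The $\stackrel{o}{\beta}_k$-weighted remainder that survives is precisely what builds the matrix ${\bf B}$ in the statement.

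First I would write the resolvent identity
$$
{\bf Q}-{\bf R}={\bf Q}\bigl({\bf R}^{-1}-{\bf Q}^{-1}\bigr){\bf R}=\frac{1}{n}\sum_{k=1}^n\frac{{\bf Q}\bOmega_k{\bf R}}{1+\alpha_k(z)}-\frac{1}{n}{\bf Q}\bSigma_n\bSigma_n^*{\bf R},
$$
multiply by ${\bf A}$, take the trace, and take expectation. Using $\tr {\bf A}{\bf Q}\bSigma_n\bSigma_n^*{\bf R}=\sum_k \bxi_k^*{\bf R}{\bf A}{\bf Q}\bxi_k$, this reduces the claim to computing the difference between $\frac{1}{n}\sum_k\mathbb{E}\tr{\bf A}{\bf Q}\bOmega_k{\bf R}/(1+\alpha_k)$ and $\frac{1}{n}\sum_k\mathbb{E}\bxi_k^*{\bf R}{\bf A}{\bf Q}\bxi_k$, and it is the factor $(1+\alpha_k)^{-1}=-z\tilde{r}_k$ appearing here that will eventually supply the $\tilde{\bf R}$ in the stated matrix form.

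Next I would apply Lemma \ref{lemma:integration_part} entry by entry to the expansion $\mathbb{E}\bxi_k^*{\bf R}{\bf A}{\bf Q}\bxi_k=\sum_{m,\ell}\mathbb{E}[\xi_{\ell,k}\,\xi_{m,k}^*\,[{\bf R}{\bf A}{\bf Q}]_{m,\ell}]$. Differentiating $\xi_{m,k}^*[{\bf R}{\bf A}{\bf Q}]_{m,\ell}$ with respect to $\xi_{r,k}^*$ produces an explicit $\delta_{m,r}$ piece that assembles, after summation against $[\bOmega_k]_{\ell,r}$, into the deterministic trace $\tr\bOmega_k{\bf R}{\bf A}{\bf Q}$; and an implicit piece coming from the $\bxi_k$-dependence of ${\bf Q}$, handled by formula \eqref{eq:diff_1}. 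Through the cyclic identity $\sum_{\ell,r}[\bOmega_k]_{\ell,r}Q_{r,\ell}=n\beta_k$ the implicit piece factors as $-\beta_k\,\bxi_k^*{\bf R}{\bf A}{\bf Q}\bxi_k$, producing the closure identity
$$
\mathbb{E}\bigl[(1+\beta_k)\,\bxi_k^*{\bf R}{\bf A}{\bf Q}\bxi_k\bigr]=\mathbb{E}\tr {\bf A}{\bf Q}\bOmega_k{\bf R}.
$$

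Substituting this back into the identity produced by the resolvent comparison and splitting $1+\beta_k=(1+\alpha_k)+\stackrel{o}{\beta}_k$, the $(1+\alpha_k)$ contribution exactly cancels the deterministic sum $\frac{1}{n}\sum_k\mathbb{E}\tr{\bf A}{\bf Q}\bOmega_k{\bf R}/(1+\alpha_k)$, leaving only a residue weighted by $\stackrel{o}{\beta}_k/(1+\alpha_k)$. Invoking $(1+\alpha_k)^{-1}=-z\tilde{r}_k$ and repackaging $\sum_k \tilde{r}_k\stackrel{o}{\beta}_k\,\bxi_k\bxi_k^{*}=\bSigma_n\tilde{\bf R}{\bf B}\bSigma_n^*$ yields the announced compact form. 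The main obstacle will be the IBP bookkeeping in the middle step: because ${\bf Q}$ depends on every column $\bxi_k$, formula \eqref{eq:diff_1} couples two matrix indices, so one has to carry four summation indices simultaneously and identify the correct cyclic-trace collapse that produces the $n\beta_k$ closing the equation. Once that identity is secured, the rest of the argument is purely algebraic and presents no further analytic subtlety.
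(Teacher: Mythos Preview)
Your proposal is correct and is essentially the same argument as the paper's: both use the integration-by-parts formula together with \eqref{eq:diff_1} to obtain the closure $\mathbb{E}[(1+\beta_k)\,\bxi_k^*{\bf R}{\bf A}{\bf Q}\bxi_k]=\mathbb{E}\tr{\bf A}{\bf Q}\bOmega_k{\bf R}$, then split $\beta_k=\alpha_k+\stackrel{o}{\beta}_k$ and use $(1+\alpha_k)^{-1}=-z\tilde r_k$. The only organizational difference is that the paper keeps the matrix indices free and first establishes the full matrix identity $\mathbb{E}{\bf Q}{\bf R}^{-1}={\bf I}_N-\tfrac{z}{n}\mathbb{E}[{\bf Q}\bSigma_n\tilde{\bf R}{\bf B}\bSigma_n^*]$ before multiplying by ${\bf R}$ and tracing against ${\bf A}$, whereas you contract against ${\bf R}{\bf A}$ from the outset via the resolvent comparison ${\bf Q}-{\bf R}={\bf Q}({\bf R}^{-1}-{\bf Q}^{-1}){\bf R}$; this yields the same final identity with no additional analytic input.
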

\begin{proof}
From the identity:
$$
{\bf Q}\left(\frac{1}{n}\bSigma_n\bSigma_n^*-z{\bf I}_N\right) ={\bf I}_N
$$
we have:
\begin{align}
z\mathbb{E}\left[{\bf Q}\right]_{p,q}&= \mathbb{E}\left[{\bf Q}\frac{\bSigma_n\bSigma_n^*}{n}\right]_{p,q}-\delta_{p,q}\label{eq:tobe_plug}\\
&=\sum_{i=1}^N \sum_{j=1}^n \frac{1}{n}\mathbb{E}\left[{\bf Q}_{p,i}\xi_{i,j}\xi_{q,j}^*\right] -\delta_{p,q}.\nonumber
\end{align}
Using the integration by parts formula in Lemma \ref{lemma:integration_part}, we have:
\begin{align*}
&\mathbb{E}\left[{\bf Q}_{p,i}\xi_{i,j}\xi_{q,j}^*\right]=\sum_{m=1}^N \mathbb{E}\left[\left[\bOmega_j\right]_{i,m}\frac{\partial \xi_{q,j}^*\left[{\bf Q}\right]_{p,i}}{\partial \xi_{m,j}^*}\right]\\
&=\sum_{m=1}^N \left[\bOmega_j\right]_{i,m}\delta_{m,q}\mathbb{E}\left[{\bf Q}\right]_{p,i} \\
&- \sum_{m=1}^N \left[\bOmega_j\right]_{i,m} \frac{1}{n}\mathbb{E}\left[\xi_{q,j}^*\left[{\bf Q}\bxi_j\right]_p \left[{\bf Q}\right]_{m,i} \right].
\end{align*}
Summing the above equality over $i$, we obtain:
$$
\mathbb{E}\left[\left[{\bf Q}\bxi_j\right]_p\xi_{q,j}^*\right]=\mathbb{E}\left[{\bf Q}\bOmega_j\right]_{p,q} - \mathbb{E}\left[\beta_j \left[{\bf Q}\bxi_j\right]_p \xi_{q,j}^*\right]
$$
Plugging $\stackrel{o}{\beta}_j={\beta}_j-\alpha_j$ into the above equality, we get:
\begin{align*}
\mathbb{E}\left[\left[{\bf Q}\bxi_j\right]_p\xi_{q,j}^*\right]&=\mathbb{E}\left[{\bf Q}\bOmega_j\right]_{p,q} -\alpha_j \mathbb{E}\left[\xi_{q,j}^*\left[{\bf Q}\bxi_j\right]_p\right]\\
&-\mathbb{E}\left[\stackrel{o}{\beta}_j\xi_{q,j}^*\left[{\bf Q}\bxi_j\right]_p\right]
\end{align*}
Hence:
$$
\mathbb{E}\left[\left[{\bf Q}\bxi_j\right]_p\xi_{q,j}^*\right]=\mathbb{E}\left[\frac{\left[{\bf Q}\bOmega_j\right]_{p,q}}{(1+\alpha_j)}\right] -\mathbb{E}\left[\frac{\stackrel{o}{\beta}_j\xi_{q,j}^*\left[{\bf Q}\bxi_j\right]_p}{(1+\alpha_j)}\right]
$$
Summing over $j$, we finally get:
\begin{align*}
\mathbb{E}\left[\frac{{\bf Q}\bSigma_n\bSigma_n^*}{n}\right]_{p,q}&=\mathbb{E}\left[{\bf Q}\frac{1}{n}\sum_{j=1}^n \frac{\bOmega_j}{(1+\alpha_j)}\right]_{p,q}\\
&+z\mathbb{E}\left[\frac{{\bf Q}\bSigma_n\tilde{\bf R}{\bf B}\bSigma_n^*}{n}\right]_{p,q}
\end{align*}
Plugging the above equality into \eqref{eq:tobe_plug}, we thus get:
\begin{align*}
\mathbb{E}\left[z{\bf Q}\right]_{p,q}&=\mathbb{E}\left[{\bf Q}\frac{1}{n}\sum_{j=1}^n \frac{\bOmega_j}{(1+\alpha_j)}\right]_{p,q}-\left[{\bf I}_N\right]_{p,q}\\
&+z\mathbb{E}\left[\frac{{\bf Q}\bSigma_n\tilde{\bf R}{\bf B}\bSigma_n^*}{n}\right]_{p,q}
\end{align*}
Therefore,
$$
\mathbb{E}\left[{\bf Q}{\bf R}^{-1}\right]_{p,q}=\left[{\bf I}_N\right]_{p,q}-z\mathbb{E}\left[\frac{{\bf Q}\bSigma_n\tilde{\bf R}{\bf B}\bSigma_n^*}{n}\right]_{p,q}
$$
thereby proving that:
$$
\mathbb{E}{\bf Q}{\bf R}^{-1}={\bf I}_N - z\mathbb{E}\left[\frac{{\bf Q}\bSigma_n\tilde{\bf R}{\bf B}\bSigma_n^*}{n}\right].
$$
As a consequence:
$$
\mathbb{E}\tr {\bf A}{\bf Q}=\tr {\bf A}{\bf R} -\frac{z}{n}\tr\mathbb{E}\left[{\bf A}{\bf Q}\bSigma_n\tilde{\bf R}{\bf B}\bSigma_n^*{\bf R}\right].
$$
\end{proof}
From Lemma \ref{lemma:AQ}, it appears that the control of $\chi_1$ amounts to showing that:
$$
z\Gamma\triangleq z\mathbb{E}\left[\tr {\bf Q}\bSigma_n\tilde{\bf R}{\bf B}\bSigma_n^*{\bf R}\right] \leq \frac{1}{n}\left(|z|+C_1\right)^{k_1} P_1\left(\left|\Im z\right|^{-1}\right)
$$
with $C_1$, $k_1$ and $P_1$ verifying the conditions of proposition \ref{prop:technical}.
The proof relies on the use of the Nash-poincar\'e inequality. But before that, we need to further workout quantity $\Gamma$ by means of the Integration by Part formula. We first expand $\Gamma$ as:
\begin{align}
\Gamma&=\frac{1}{n}\sum_{p,q,m=1}^N \sum_{\ell=1}^n \mathbb{E}\left[\left[{\bf Q}\right]_{p,q}\xi_{q,\ell} \xi_{m,\ell}^*\stackrel{o}{\beta}_\ell \right]\left[{\bf R}\right]_{m,p}\tilde{r}_\ell\label{eq:gamma}
\end{align}
Using the integration by part formula, we have:
\begin{small}
\begin{align*}
&\mathbb{E}\left[\left[{\bf Q}\right]_{p,q}\xi_{q,\ell}\xi_{m,\ell}^*\stackrel{o}{\beta}_\ell\right]=\sum_{s=1}^N \left[\bOmega_\ell\right]_{q,s}\mathbb{E}\left[\frac{\partial \left[{\bf Q}\right]_{p,q}\xi_{m,\ell}^*\stackrel{o}{\beta}_\ell}{\partial \xi_{s,\ell}^*}\right]\\
&=\sum_{s=1}^{N}\left[\bOmega_\ell\right]_{q,s}\mathbb{E}\left[\left[{\bf Q}\right]_{p,q}\stackrel{o}{\beta}_\ell\right] \delta_{m,s}+\sum_{s=1}^N \left[\bOmega_\ell\right]_{q,s}\mathbb{E}\left[\left[{\bf Q}\right]_{p,q}\xi_{m,\ell}^*\frac{\partial \stackrel{o}{\beta}_\ell}{\partial \xi_{s,\ell}^*}\right]\\
& -\left[\bOmega_\ell\right]_{q,s}\frac{1}{n}\mathbb{E}\left[\left[{\bf Q}\bxi_\ell\right]_p\left[{\bf Q}\right]_{s,q}\xi_{m,\ell}^*\stackrel{o}{\beta}_\ell\right]\\
&=-\frac{1}{n}\mathbb{E}\left[\left[\bOmega_\ell {\bf Q}\right]_{q,q} \left[{\bf Q}\bxi_\ell\right]_p \xi_{m,\ell}^*\stackrel{o}{\beta}_\ell\right] +\left[\bOmega_\ell\right]_{q,m}\mathbb{E}\left[\left[{\bf Q}\right]_{p,q}\stackrel{o}{\beta}_\ell\right]\\
&+\sum_{s=1}^N \left[\bOmega_\ell\right]_{q,s}\mathbb{E}\left[\left[{\bf Q}\right]_{p,q}\xi_{m,\ell}^*\frac{\partial \stackrel{o}{\beta}_\ell}{\partial \xi_{s,\ell}^*}\right]
\end{align*}
 \end{small}
Summing the above equation over $q$, we get:
\begin{align*}
&\mathbb{E}\left[\left[{\bf Q}\bxi_\ell\right]_p\xi_{m,\ell}^*\stackrel{o}{\beta}_\ell\right]=-\mathbb{E}\left[\frac{1}{n}\tr\left(\bOmega_\ell {\bf Q}\right)\left[{\bf Q}\bxi_\ell\right]_p\xi_{m,\ell}^*\stackrel{o}{\beta}_\ell \right] \\
&+\mathbb{E}\left[\left[{\bf Q}\bOmega_\ell\right]_{p,m}\stackrel{o}{\beta}_\ell\right] +\sum_{q=1}^N\sum_{s=1}^N \left[\bOmega_\ell\right]_{q,s}\mathbb{E}\left[\left[{\bf Q}\right]_{p,q}\xi_{m,\ell}^*\frac{\partial \stackrel{o}{\beta}_\ell}{\partial \xi_{s,\ell}^*}\right]
\end{align*}
Writing $\frac{1}{n}\tr \bOmega_\ell {\bf Q}$ as $\stackrel{o}{\beta}_\ell+\alpha_\ell$ and using the same technique as in the proof of Lemma \ref{lemma:AQ}, we finally get:
\begin{align}
&\mathbb{E}\left[\left[{\bf Q}\bxi_\ell\right]_p\xi_{m,\ell}^*\stackrel{o}{\beta}_\ell\right]=z\tilde{r}_\ell \mathbb{E}\left[\left(\stackrel{o}{\beta}_\ell\right)^2\left[{\bf Q}\bxi_\ell\right]_p\xi_{m,\ell}^*\right]\nonumber\\
&-z\tilde{r}_\ell\mathbb{E}\left[\left[{\bf Q}\bOmega_\ell\right]_{p,m}\stackrel{o}{\beta}_\ell\right]-\sum_{s,q=1}^N z\tilde{r}_\ell\left[\bOmega_\ell\right]_{q,s}\mathbb{E}\left[\left[{\bf Q}\right]_{p,q}\xi_{m,\ell}^*\frac{\partial \stackrel{o}{\beta}_\ell}{\partial \xi_{s,\ell}^*}\right]\label{eq:final_b}
\end{align}
Plugging \eqref{eq:final_b} into \eqref{eq:gamma}, we finally obtain:
\begin{small}
\begin{align*}
\Gamma&=\frac{z}{n}\mathbb{E}\left[\tr\left( {\bf Q}\bSigma_n\tilde{\bf R}^2{\bf B}^2\bSigma_n^*{\bf R}\right)\right]-\frac{z}{n}\sum_{\ell=1}^n \mathbb{E}\left[\stackrel{o}{\beta}_\ell\tr\left({\bf Q}\bOmega_\ell{\bf R}\tilde{\bf R}^2\right)\right]\\
&-\frac{z}{n}\sum_{\ell=1}^n\sum_{s=1}^N \tilde{r}_\ell^2\mathbb{E}\left[\left[\bSigma_n^*{\bf R}{\bf Q}\bOmega_\ell\right]_{\ell,s}\frac{\partial \stackrel{o}{\beta}_{\ell}}{\partial \xi_{s,\ell}^*}\right]\\
&\triangleq \Delta_1 -\Delta_2 -\Delta_3.
\end{align*}
\end{small}
In the following we will prove that $\Delta_i$ satisfies:
$$
\Delta_i \leq \frac{K_i}{n}\left(|z|+\tilde{C}_i\right)^{\tilde{k}_i}\tilde{P}_i(|\Im z|^{-1})
$$
for some positive constant $\tilde{C}_i,K_i$, integer $k_i$ and polynomial $\tilde{P}_i$ independent of $N$.
This will be sufficient to control $\chi_1(z)$ since the underlying polynomials have positive coefficients.
Closer scrutiny of the expressions of $\Delta_i, i=1,2,3$, reveals that they make appear quantities of the form $\frac{1}{n}\tr {\bf A}{\bf Q}(z)$ with ${\bf A}$ is a some deterministic matrix. It is thus easy to convince oneself that controlling the variance of these terms is essential. This will be the goal of the following lemma whose proof is deferred to Appendix \ref{app:variance}:
\begin{lemma}
Let ${\bf A}$ be a $N\times N$ deterministic matrix. Then, we have for any $z\in\mathbb{C}_{+}$,
$$
{\rm var}\left(\frac{1}{n}\tr {\bf A}{\bf Q}(z)\right) \leq \frac{C}{n^2}\left\|{\bf A}\right\|^2\left(|z|+1\right)\left(\frac{1}{|\Im z|^4}+\frac{1}{|\Im z|^3}\right)
$$
where $C$, a positive constant and $P$, a polynomial with positive coefficients, are independent of $N$.
\label{lemma:var_A}
\end{lemma}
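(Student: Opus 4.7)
My plan is to apply the Nash-Poincar\'e inequality (Lemma \ref{lemma:nash}) to the smooth functional $\Gamma = \frac{1}{n}\tr{\bf A}{\bf Q}(z)$ of the Gaussian vectors $\bxi_1,\ldots,\bxi_n$. This is the same template that is already used twice in the proof of Theorem \ref{th:no_eigenvalue}; the new difficulty here is that I must keep explicit track of the dependence on $|z|$ and on $|\Im z|^{-1}$, not merely on $N$.

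The first step is to compute the Wirtinger derivatives using formulae (\ref{eq:diff_1})--(\ref{eq:diff_2}), which give
$$
\frac{\partial \Gamma}{\partial \xi_{s,k}} = -\frac{1}{n^2}[\bxi_k^*{\bf Q}{\bf A}{\bf Q}]_s, \qquad \frac{\partial \Gamma}{\partial \xi_{s,k}^*} = -\frac{1}{n^2}[{\bf Q}{\bf A}{\bf Q}\bxi_k]_s.
$$
Substituting into \eqref{eq:variance}, bounding $\|\bOmega_k\|\le w_{\rm max}$, and applying Cauchy--Schwarz in the $(s,r)$-block for each fixed $k$ majorizes each Nash-Poincar\'e sum (up to a constant) by a quadratic form of the type $\bxi_k^*{\bf Q}{\bf A}{\bf Q}{\bf Q}^*{\bf A}^*{\bf Q}^*\bxi_k$, together with its analogue with the outer ${\bf Q}$ and ${\bf Q}^*$ swapped.

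Summing over $k$ via $\sum_k \bxi_k\bxi_k^* = \bSigma_n\bSigma_n^*$, the bound collapses to an expression of the form
$$
{\rm var}(\Gamma)\ \le\ \frac{C\,w_{\rm max}\,\|{\bf A}\|^2}{n^4}\,\mathbb{E}\,\tr\!\bigl({\bf Q}{\bf A}{\bf Q}{\bf Q}^*{\bf A}^*{\bf Q}^*\,\bSigma_n\bSigma_n^*\bigr),
$$
up to the symmetric companion. At this point the decisive trick is to eliminate $\bSigma_n\bSigma_n^*$ via the resolvent identity $\bSigma_n\bSigma_n^* = n(z{\bf I}_N + {\bf Q}^{-1})$. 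The trace splits into $nz\,\tr({\bf Q}{\bf A}{\bf Q}{\bf Q}^*{\bf A}^*{\bf Q}^*)$ and, after cyclically cancelling ${\bf Q}^{-1}$ against a ${\bf Q}$, $n\,\tr({\bf A}{\bf Q}{\bf Q}^*{\bf A}^*{\bf Q}^*)$. With $\|{\bf Q}\|\le |\Im z|^{-1}$, these are bounded respectively by $n|z|N\|{\bf A}\|^2/|\Im z|^4$ and $nN\|{\bf A}\|^2/|\Im z|^3$. Dividing by $n^4$ and using $N\asymp n$ from Assumption \ref{ass:regime} produces precisely the claimed $\frac{C}{n^2}\|{\bf A}\|^2(|z|+1)(|\Im z|^{-4}+|\Im z|^{-3})$.

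Otherwise the calculation is mechanical; the genuine subtlety is exactly the resolvent-identity maneuver above. Bounding $\bSigma_n\bSigma_n^*$ by its operator norm would leave a random quantity with no useful deterministic control in the operating regime, and would in any case produce a power of $|\Im z|^{-1}$ one too high and no $(|z|+1)$ prefactor at all. It is precisely this trading of one resolvent for ${\bf I}_N + z{\bf Q}$ that simultaneously yields the $(|z|+1)$ factor and the mixed $|\Im z|^{-3}+|\Im z|^{-4}$ decay on which the downstream control of $\chi_1(z)$ depends.
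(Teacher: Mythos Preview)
Your proposal is correct and follows essentially the same route as the paper: apply the Nash--Poincar\'e inequality with the derivative formulas \eqref{eq:diff_1}--\eqref{eq:diff_2}, dominate $\bOmega_k\preceq w_{\rm max}{\bf I}_N$ to collapse the double sum over $s,r$ into $\bxi_k^*{\bf Q}{\bf A}{\bf Q}{\bf Q}^*{\bf A}^*{\bf Q}^*\bxi_k$, sum over $k$ to recover $\bSigma_n\bSigma_n^*$, and then invoke the resolvent identity $\frac{1}{n}\bSigma_n\bSigma_n^*{\bf Q}={\bf I}_N+z{\bf Q}$ together with $\|{\bf Q}\|\le|\Im z|^{-1}$. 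The only cosmetic slip is that the factor $\|{\bf A}\|^2$ appears one line too early in your displayed bound (at that stage ${\bf A}$ is still inside the trace); it is extracted only after the resolvent identity splits the trace, exactly as you describe in the following sentence.
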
 
With Lemma \ref{lemma:var_A} at hand, we are now in position to handle the terms $\Delta_i, i=1,2,3$. We start by controlling $\Delta_1$. For that, consider $\bSigma_{(i)}$ to be the matrix $\bSigma_n$ without its $i$-th column. Define ${\bf Q}_{(i)}$ the resolvent matrix given by:
$$
{\bf Q}_{(i)}=\left(\frac{1}{n}\bSigma_{(i)}\bSigma_{(i)}^*-z{\bf I}_N\right)^{-1}
$$
and $\beta_{i,(i)}=\frac{1}{n}\tr\bOmega_i{\bf Q}_{(i)}$. Let $\stackrel{o}{\beta}_{i,(i)}=\beta_{i,(i)}-\mathbb{E}\beta_{i,(i)}$ and ${\bf B}_{(i)}={\rm diag}\left(\stackrel{o}{\beta}_{1,(1)},\cdots,\stackrel{o}{\beta}_{n,(n)}\right)$. From the rank-one perturbation Lemma \cite[Lemma 2.6]{SIL95}, we obtain:
$$
\max_{1\leq i\leq n }\left|\stackrel{o}{\beta}_i-\stackrel{o}{\beta}_{i,(i)}\right| \leq \frac{2w_{\rm max}}{n\left|\Im z\right|}
$$
Decompose $\Delta_1$ as:
\begin{align*}
\Delta_1&=\frac{z}{n}\sum_{i=1}^n \mathbb{E}\left[\left(\left|\stackrel{o}{\beta}_i\right|^2-\left|\stackrel{o}{\beta}_{i,(i)}\right|^2\right)\left[\bSigma_n^*{\bf R}_N{\bf Q}\bSigma_n\tilde{\bf R}_N^2\right]_{i,i}\right] \\
&+\frac{z}{n}\sum_{i=1}^n \mathbb{E}\left[\left|\stackrel{o}{\beta}_{i,(i)}\right|^2\left[\bSigma_n^*{\bf R}_N{\bf Q}\bSigma_n\tilde{\bf R}_N^2\right]_{i,i}\right]\\
&\triangleq \Delta_{1,1}+\Delta_{1,2}.
\end{align*}
We start by dealing with $\Delta_{1,1}$. First, we need to bound the quantity $\left|\stackrel{o}{\beta}_i\right|^2-\left|\stackrel{o}{\beta}_{i,(i)}\right|^2$. We have:
\begin{align}
\left|\stackrel{o}{\beta}_i\right|^2-\left|\stackrel{o}{\beta}_{i,(i)}\right|^2&=\left(\left|\stackrel{o}{\beta}_i\right|-\left|\stackrel{o}{\beta}_{i,(i)}\right|\right)\left(\left|\stackrel{o}{\beta}_i\right|+\left|\stackrel{o}{\beta}_{i,(i)}\right|\right) \nonumber\\
&\leq \frac{2Nw_{\rm max}}{n\left|\Im z\right|} \left|\stackrel{o}{\beta}_i-\stackrel{o}{\beta}_{i,(i)}\right| \nonumber \\
&\leq \frac{4Nw_{\rm max}^2}{n^2\left|\Im z\right|^2}. \label{eq:Delta_11}
\end{align}
From \eqref{eq:Delta_11}, $\Delta_{1,1}$ can be bounded by:
$$
\Delta_{1,1} \leq \frac{|z|}{n^3}\frac{4N w_{\rm max}^2}{\left|\Im z\right|^2} \sum_{i=1}^n \mathbb{E}\left|\left[\bSigma_n^*{\bf R}_N{\bf Q}\bSigma_n\tilde{\bf R}_N^2\right]_{i,i}\right|.
$$
We need thus to bound $\mathbb{E}\left|\left[\frac{1}{n}\bSigma_n^*{\bf R}_N{\bf Q}\bSigma_n\tilde{\bf R}_N^2\right]_{i,i}\right|$. We have:
\begin{align*}
\mathbb{E}\left|\left[\frac{1}{n}\bSigma_n^*{\bf R}_N{\bf Q}\bSigma_n\tilde{\bf R}_N^2\right]_{i,i}\right|&=\mathbb{E}\left[\frac{1}{n}\bxi_i^*{\bf R}_N{\bf Q }\bxi_i\tilde{r}_i^2\right] \\
&\leq |\tilde{r}|_{i}^2 \mathbb{E}\left[\|{\bf R}_N{\bf Q}\| \frac{1}{n}\bxi_i^*\bxi_i\right] \\
&\leq \frac{1}{|\Im z|^4} \frac{1}{n}\tr \bOmega_i \\
&\leq \frac{Nw_{\rm max}}{n|\Im z|^4}
\end{align*}
and thus:
$$
\Delta_{1,1} \leq 4|z|\left(\limsup_N\frac{N}{n}\right)^2\frac{w_{\rm max}^3}{n|\Im z|^6}.
$$
We now move to the control of $\Delta_{1,2}$. First, write $\Delta_{1,2}$ as:
$$
\Delta_{1,2}=\frac{z}{n}\sum_{i=1}^n \mathbb{E}\left[\left|\stackrel{o}{\beta}_{i,(i)}\right|^2\left[\bxi_i^*{\bf R}{\bf Q}\bxi_i\tilde{r}_i^2\right]\right].
$$ 
Using the relation
\begin{equation}
{\bf Q}\bxi_i=\frac{{\bf Q}_{(i)} \bxi_i^*}{1+\frac{1}{n}\bxi_i^*{\bf Q}\bxi_i^*},
\label{eq:inversion}
\end{equation}
 we obtain:
\begin{align*}
\Delta_{1,2}&\leq \frac{|z|}{n}\sum_{i=1}^n \mathbb{E}\left[\left|\stackrel{o}{\beta}_{i,(i)}\right|^2 \frac{\left|\bxi_i^* {\bf R}_N {\bf Q}_{(i)}\bxi_i \tilde{r}_i^2\right|}{1+\frac{1}{n}\bxi_i^*\bQ \bxi_i}\right] \\
&\leq \frac{|z|}{n|\Im z|^4}\sum_{i=1}^n \mathbb{E}\left[\left|\stackrel{o}{\beta}_{i,(i)}\right|^2\bxi_i^*\bxi_i \right].
\end{align*}
Since $\beta_{i,(i)}$ is independent of $\bxi_i$, and thus :
\begin{align*}
\Delta_{1,2}& \leq \frac{|z|}{n|\Im z|^4} \sum_{i=1}^n \tr \bOmega_i\mathbb{E}|\stackrel{o}{\beta}_{i,(i)}|^2\\
&\leq \frac{N w_{\rm max}|z|}{n|\Im z|^4} \sum_{i=1}^n \mathbb{E}|\stackrel{o}{\beta}_{i,(i)}|^2
\end{align*}
From Lemma \ref{lemma:var_A}, we have:
$$
\mathbb{E}\left|\stackrel{o}{\beta}_{i,(i)}\right|^2 \leq \frac{2w_{\rm max}^3}{n^2} (|z|+1)\left(\frac{1}{|\Im z|^4}+\frac{1}{|\Im z|^3}\right)
$$
Hence, 
\begin{align*}
\Delta_{1,2} &\leq \limsup\frac{N}{n}\frac{2 w_{\rm max}^4}{n|\Im z|^4} (|z|+1)^2 \left(\frac{1}{|\Im z|^4}+\frac{1}{|\Im z|^3}\right)\\
&\triangleq \frac{K}{n} (|z|+1)^2 P(|\Im z|^{-1}),
\end{align*}
thereby proving the desired result.
The control of $\Delta_2$ relies on the use of the Cauchy-schwartz inequality. We have:
\begin{align*}
\Delta_2&=\frac{z}{n}\sum_{\ell=1}^n \mathbb{E}\left[\stackrel{o}{\beta}_\ell \tr \left({\bf Q}\bOmega_\ell{\bf R}_N\tilde{\bf R}_N^2\right)\right] \\
&\leq |z|\sum_{\ell=1}^n \sqrt{\mathbb{E}|\stackrel{o}\beta_\ell|^2}\sqrt{{\rm var}\frac{1}{n}\tr {\bf Q}\bOmega_\ell {\bf R}_N\tilde{\bf R}_N^2}
\end{align*}
From Lemma \ref{lemma:var_A}, we can bound $\mathbb{E}\left|\stackrel{o}{\beta}_\ell\right|^2$ and ${\rm var}\tr {\bf Q}\bOmega_\ell {\bf R}_N\tilde{\bf R}_N^2$ as:
\begin{align*}
\mathbb{E}\left|\stackrel{o}\beta_\ell\right|^2&\leq \frac{2w_{\rm max}^3}{n^2}(|z|+1)\left(\frac{1}{|\Im z|^4}+\frac{1}{|\Im z|^3}\right) \\
{\rm var}\tr \frac{1}{n}{\bf Q}\bOmega_\ell {\bf R}_N\tilde{\bf R}_N^2 &\leq \frac{2w_{\rm max}^3}{|\Im z|^6n^2}(|z|+1) \left(\frac{1}{|\Im z|^4}+\frac{1}{|\Im z|^3}\right).
\end{align*}
Using the fact that $\sqrt{xy}\leq \frac{x+y}{2}$ for positive scalars $x,y$, we finally get:
\begin{align*}
|\Delta_2| &\leq \frac{2w_{\rm max}^3(|z|+1)^2}{n} \left(\frac{1}{|\Im z|^4}+\frac{1}{|\Im z|^3}+\frac{1}{|\Im z|^{10}}+\frac{1}{|\Im z|^9}\right) \\
&\triangleq K_2(|z|+1)^2P_2(|\Im z|^{-1})
\end{align*}
Finally, we will move to the treatment of $\Delta_3$. Recall that $\Delta_3$ is given by:
$$
\Delta_3=\frac{z}{n}\sum_{\ell=1}^n \sum_{s=1}^N \tilde{r}_\ell^2\mathbb{E}\left[\left[\bSigma_n^*{\bf R}_N{\bf Q}\bOmega_\ell\right]_{\ell,s}\frac{\partial \stackrel{o}{\beta}_\ell}{\partial \xi_{s,\ell}^*}\right].
$$
Using the differentiation formulae in \eqref{eq:diff_1}, we get:
$$
\frac{\partial \stackrel{o}{\beta}_\ell}{\partial \xi_{s,\ell}^*} =-\frac{1}{n^2}\left[{\bf Q}\bOmega_\ell{\bf Q}\bSigma_n\right]_{s,\ell}.
$$
Hence,
\begin{align*}
\Delta_3&=-\frac{z}{n^3}\sum_{\ell=1}^n \sum_{s=1}^N \tilde{r}_\ell^2\mathbb{E}\left[\left[\bSigma_n^*{\bf R}_N{\bf Q}\bOmega_\ell\right]_{\ell,s} \left[{\bf Q}\bOmega_\ell{\bf Q}\bSigma_n\right]_{s,\ell}\right] \\
&=-\frac{z}{n^3}\sum_{\ell=1}^n \tilde{r}_\ell^2\mathbb{E}\left[\bxi_\ell^*{\bf R}_N{\bf Q}\bOmega_\ell {\bf Q}\bOmega_\ell{\bf Q}\bxi_\ell\right].
\end{align*}
The above relation allows us to bound $\Delta_3$ as:
\begin{align*}
|\Delta_3| &\leq \frac{|z|w_{\rm max}^2}{n^3} \sum_{\ell=1}^n |\tilde{r}_\ell|^2\|{\bf R}_N\|\mathbb{E}\left[\bxi_\ell^*\bxi_\ell \|{\bf Q}\|^3\right] \\
&\leq \frac{|z|w_{\rm max}^3}{n|\Im z|^6}\limsup\frac{N}{n} \\
&\triangleq \frac{K_3|z|}{n}P_3(|\Im z|^{-1}).
\end{align*}
From the obtained bounds for the scalars $\Delta_i,i=1,2,3$, we can deduce that:
$$
\left|z\Gamma\right| \leq \frac{1}{n}\left(|z|+C_1\right)^{k_1}P_1(|\Im z|^{-1}), 
$$
which is, as mentioned above, the required inequality to control $\chi_1$. 


\subsection{Control of $\chi_2(z)$}
We now move to the control of $\chi_2(z)$ given by:
$$
\chi_2(z)=N\tr {\bf R}_N-N\tr{\bf T}_N.
$$
To this end, we will resort to the resolvent identity : ${\bf A}^{-1}-{\bf B}^{-1}={\bf B}^{-1}\left({\bf A}-{\bf B}\right){\bf A}^{-1}$ for any invertible matrices ${\bf B}$ and ${\bf A}$. We therefore obtain:
\begin{align*}
N\tr{\bf R}_N-N\tr {\bf T}_N&=\frac{N}{n} \tr{\bf R}_N\left(\sum_{j=1}^n \frac{\bOmega_j}{1+\delta_j}-\frac{\bOmega_j}{1+\alpha_j}\right){\bf T}\\
&=\frac{N}{n}\sum_{j=1}^n \frac{\tr ({\bf R}_N\bOmega_j{\bf T})(\alpha_j-\delta_j)}{(1+\alpha_j)(1+\delta_j)}\\
&=\frac{N}{n}\sum_{j=1}^n z^2\tilde{r}_j\tilde{\delta}_j\tr {\bf R}_N\bOmega_j{\bf T}(\alpha_j-\delta_j),
\end{align*}
where $\tilde{\delta}_j=-\frac{1}{z(1+\delta_j)}$.
Using property 6 of Lemma 1 in \cite{LOU10}, we can easily check that $\tilde{\delta}_j,j=1,\cdots,n$ similar to $\tilde{r}_j$ are Stieltjes transforms of probability measures carried by $\mathbb{R}_{+}$. We therefore have:
$$
\max\left(\left|\tilde{\delta}_j\right|,\left|\tilde{r}_j\right|\right)\leq \frac{1}{|\Im z|}.
$$
Hence,
$$
\left|N \tr{\bf R}_N-N\tr{\bf T}_N\right|\leq \frac{|z|^2N^2}{|\Im z|^4} \max_{1\leq j\leq n}\left|\alpha_j-\delta_j\right|.
$$
To control $\chi_2$, it suffices to show that there exists constants $C$ and $K$, integer $k$ and polynomial $P$ with positive coefficients and independent of $N$ such that:
$$
\max_{1\leq j\leq n} \left|\alpha_j-\delta_j\right| \leq\frac{K}{N^2} (|z|+C)^k P(|\Im z|^{-1}).
$$
This will be the objective of the next derivations in this section.

We start by decomposing $\alpha_j-\delta_j$ as:
\begin{align*}
\alpha_j-\delta_j&=\frac{1}{n}\tr \bOmega_j\mathbb{E}{\bf Q}- \frac{1}{n}\tr \bOmega_j {\bf R} +\frac{1}{n}\tr \bOmega_j{\bf R} -\frac{1}{n}\tr \bOmega_j{\bf T} \\
&=\epsilon_j(z)+\frac{1}{n}\tr \bOmega_j{\bf R} -\frac{1}{n}\tr \bOmega_j{\bf T}.
\end{align*}
The control of $\epsilon_j(z)$ is similar to that of $\chi_1(z)$, the presence of matrix $\bOmega_j$ instead of the identity matrix requiring only slight modifications of the proof. We can thus deduce that:
\begin{equation}
\max_{1\leq j\leq n}\left|\epsilon_j\right|\leq \frac{K_\epsilon}{N^2}(|z|+C_\epsilon)^{k_\epsilon}P_\epsilon(|\Im z|^{-1}),
\label{eq:epsilon}
\end{equation}
for some constants $K_\epsilon$ and $C_\epsilon$, integer $k_\epsilon$ and polynomial $P_\epsilon$ independent of $N$.
Again, using the resolvent identity as above, we obtain:
\begin{equation}
\alpha_j-\delta_j=\epsilon_j(z)+\frac{1}{n^2}\sum_{k=1}^n \frac{(\alpha_k-\delta_k)\tr \bOmega_j{\bf R}_N\bOmega_k{\bf T}}{(1+\alpha_k)(1+\delta_k)}.
\label{eq:alpha-delta}
\end{equation}
Define $\boldsymbol{\alpha}=\left[\alpha_1,\cdots,\alpha_n\right]^{\mbox{\tiny T}}$, $\boldsymbol{\delta}=\left[\delta_1,\cdots,\delta_n\right]^{\mbox{\tiny T}}$ and $\boldsymbol{\epsilon}=\left[\epsilon_1(z),\cdots,\epsilon_n(z)\right]$. Then \eqref{eq:alpha-delta} writes as:
\begin{equation}
\left({\bf I}_n-{\bf A}\right)\left(\boldsymbol{\alpha}-\boldsymbol{\delta}\right)=\boldsymbol{\epsilon},
\label{eq:linear_system}
\end{equation}
where ${\bf A}$ is a $n\times n$ matrix with entries:
$$
\left[{\bf A}\right]_{j,k}=\frac{1}{n^2}\frac{\tr \bOmega_j{\bf R}_N\bOmega_k{\bf T}}{(1+\alpha_k)(1+\delta_k)}.
$$
In order to control the difference vector $\boldsymbol{\alpha}-\boldsymbol{\delta}$, we need first to check that ${\bf I}_n-{\bf A}$ is invertible. For that, notice that by Cauchy-Schwartz inequality:
$$
\left|\left[{\bf A}\right]_{j,k}\right| \leq \sqrt{\left|\left[{\bf B}\right]_{j,k}\right|} \sqrt{\left|\left[{\bf C}\right]_{j,k}\right|}
$$ 
where ${\bf B}$ and ${\bf C}$ are $n\times n$ matrices with entries:
\begin{align*}
\left[{\bf B}\right]_{j,k}&=\frac{1}{n^2}\frac{\tr\bOmega_j{\bf R}_N\bOmega_k{\bf R}_N}{\left|1+\alpha_k\right|^2} \\
\left[{\bf C}\right]_{j,k}&=\frac{1}{n^2}\frac{\tr\bOmega_j{\bf T}\bOmega_k{\bf T}}{\left|1+\delta_k\right|^2}.
\end{align*}
It follows from the algebraic lemma proven in Appendix \ref{app:algebraic} that ${\bf I}_n-{\bf A}$ is invertible provided that ${\bf B}$ or ${\bf C}$ have spectral norms strictly less than $1$, in which case:
\begin{equation}
\left\|\left({\bf I}_n-{\bf A}\right)^{-1}\right\|_{\infty}\leq \sqrt{\left\|\left({\bf I}_n-{\bf B}\right)^{-1}\right\|_{\infty}}\sqrt{\left\|\left({\bf I}_n-{\bf C}\right)^{-1}\right\|_{\infty}}.
\label{eq:relation}
\end{equation}
It appears from \eqref{eq:relation} that one needs to study matrices ${\bf B}$ and ${\bf C}$, which are at first sight easier to manipulate, mainly because they either involve ${\bf R}_N$ or ${\bf T}$. This however is not trivial. We state the result in the following proposition and for sake of readability defer the proof to Appendix \ref{app:BC}.
\begin{proposition}
Assume that $z\in\mathbb{C}_{+}$. Then, 
\begin{enumerate}
\item Matrix ${\bf C}$ satisfies $\rho({\bf C}) <1$. Moreover,
\begin{equation}
\left\|\left({\bf I}_n-{\bf C}\right)^{-1}\right\|_{\infty} \leq \frac{K(\eta^2+|z|^2)^2}{|\Im z|^4}
\label{eq:first_desired_1}
\end{equation}
where $K$ and $\eta$ are some positive constants independent of $N$. 
\item There exists $2$ polynomials $Q_1$ and $Q_2$ independent of $N$ with positive coefficients such that for $N$ large enough and $z\in \mathcal{E}_N$ given by
$$
\mathcal{E}_N=\left\{z\in\mathbb{C}_{+}, \frac{1}{N^2}Q_1(|z|)Q_2(|\Im z|^{-1}) \leq \frac{1}{2}\right\}
$$
we have $\rho({\bf B})\leq 1$ and:
$$
\|({\bf I}_n-{\bf B})^{-1}\| \leq \tilde{K}\frac{(\tilde{\eta}^2+|z|^2)^2}{|\Im z|^4}.
$$
\end{enumerate}
\label{prop:BC}
\end{proposition}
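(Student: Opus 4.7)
The plan for both parts is to reduce the analysis to the entrywise-nonnegative \emph{Hermitian} companion
\[
[{\bf D}]_{j,k} \triangleq \frac{1}{n^2}\frac{\tr(\bOmega_j\bT\bOmega_k\bT^*)}{|1+\delta_k|^2},
\]
for which the Perron--Frobenius machinery coupled with the Stieltjes-transform identity for $\Im\bT$ yields $\rho({\bf D})<1$ together with a quantitative bound on $\|({\bf I}_n-{\bf D})^{-1}\|_\infty$. Matrices ${\bf C}$ and ${\bf B}$ are then recovered from ${\bf D}$ (and from its analogue built from ${\bf R}_N,\alpha_k$) via a Cauchy--Schwarz comparison and a perturbation argument, respectively.

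\textbf{Part 1 (${\bf C}$).} First, $\bT(\bar z)=\bT(z)^*$ follows from the conjugation symmetry $\delta_k(\bar z)=\overline{\delta_k(z)}$ (uniqueness of the Stieltjes-transform solution of the fixed-point system). Computing $(\bT^{-1})^*-\bT^{-1}=\frac{2i}{n}\sum_k \frac{\Im\delta_k}{|1+\delta_k|^2}\bOmega_k+2i\Im z\cdot {\bf I}$ and sandwiching with $\bT$, $\bT^*$ gives
\[
\Im\bT=\bT\Bigl(\tfrac{1}{n}\sum_k \tfrac{\Im\delta_k}{|1+\delta_k|^2}\bOmega_k+\Im z\,{\bf I}\Bigr)\bT^*.
\]
Taking $\tfrac{1}{n}\tr\bOmega_j$ of this identity produces the linear system $({\bf I}_n-{\bf D})\,\Im\boldsymbol{\delta}=\Im z\,{\bf v}$, with $v_j=\tfrac{1}{n}\tr\bOmega_j\bT\bT^*>0$. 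Since ${\bf D}\geq 0$ entrywise while $\Im\boldsymbol{\delta}$ and ${\bf v}$ are both strictly positive, pairing against a non-negative left Perron eigenvector of ${\bf D}$ forces $\rho({\bf D})<1$. Using $\|\bT\|\leq|\Im z|^{-1}$, $|1+\delta_k|\geq|\Im z|/|z|$ (from $|\tilde\delta_k|\leq 1/|\Im z|$), and Assumption~\ref{ass:boundedness}, one derives explicit two-sided bounds on $v_j$ and $\Im\delta_j$ that feed back through the identity to yield $\|({\bf I}_n-{\bf D})^{-1}\|_\infty\leq K(\eta^2+|z|^2)^2/|\Im z|^4$. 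To transfer this to ${\bf C}$, I would apply Cauchy--Schwarz for the Hilbert--Schmidt inner product to
\[
\tr(\bOmega_j\bT\bOmega_k\bT)=\langle \bOmega_j^{1/2}\bT\bOmega_k^{1/2},\,\bOmega_j^{1/2}\bT^*\bOmega_k^{1/2}\rangle,
\]
which gives $|[{\bf C}]_{j,k}|\leq\sqrt{[{\bf D}]_{j,k}[\widetilde{\bf D}]_{j,k}}$ for a partially transposed variant $\widetilde{\bf D}$, whence $\rho(|{\bf C}|)\leq\sqrt{\rho({\bf D})\rho(\widetilde{\bf D})}<1$; the infinity-norm inverse bound follows from a row-sum Cauchy--Schwarz of the same flavour.

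\textbf{Part 2 (${\bf B}$).} I would proceed by perturbation from ${\bf C}$. A telescoping decomposition bounds $\|{\bf B}-{\bf C}\|_\infty$ by a polynomial in $|z|,|\Im z|^{-1}$ times $\max_k|\alpha_k-\delta_k|$; the resolvent identity gives $\|{\bf R}_N-\bT\|\leq |\Im z|^{-2}\,w_{\max}\max_k|\alpha_k-\delta_k|$, so the single scalar $\max_k|\alpha_k-\delta_k|$ controls everything. Combining the bound \eqref{eq:epsilon} on $\max_k|\epsilon_k|$ with the linear-system argument \eqref{eq:alpha-delta}--\eqref{eq:relation} (which is inverted using Part 1) produces $\max_k|\alpha_k-\delta_k|\leq \frac{1}{N^2}Q_1(|z|)Q_2(|\Im z|^{-1})$ for explicit polynomials $Q_1,Q_2$, and $\mathcal{E}_N$ is defined so that this quantity stays below $1/(2\|({\bf I}_n-{\bf C})^{-1}\|_\infty)$. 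On $\mathcal{E}_N$, the Neumann-series identity $({\bf I}_n-{\bf B})^{-1}=({\bf I}_n-{\bf C})^{-1}[{\bf I}_n-({\bf B}-{\bf C})({\bf I}_n-{\bf C})^{-1}]^{-1}$ converges and yields both $\rho({\bf B})\leq 1$ and the claimed inverse-norm bound, with the factor $2$ absorbed into the new constant $\tilde K$.

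\textbf{Main obstacle.} The central difficulty is that ${\bf C}$ is truly non-Hermitian ($\bT$ appears twice rather than $\bT,\bT^*$), so the positivity provided by $\Im\bT\succeq 0$ is not directly available; the Hilbert--Schmidt Cauchy--Schwarz detour through ${\bf D}$ costs a square root that must be paid back through careful quantitative tracking of Stieltjes-transform lower bounds on $|1+\delta_k|$, $\Im\delta_k$ and $v_k$ in terms of $|z|,|\Im z|^{-1}$, which is what produces the specific exponents in the stated bound. For Part 2, the delicate point is avoiding a circular argument: the bound on $|\alpha_k-\delta_k|$ relies on inverting ${\bf I}_n-{\bf A}$, which is itself controlled by Part 1, and $\mathcal{E}_N$ is chosen precisely so as to render the perturbation strictly sub-critical and break the apparent circularity for $N$ large.
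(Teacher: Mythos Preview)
Your Part~1 is essentially the paper's argument: the matrix you call ${\bf D}$ is exactly what the paper works with (the definition of ${\bf C}$ in the body should carry a ${\bf T}^*$ in the second slot, as is clear from the proof in Appendix~\ref{app:BC}), and the linear system $({\bf I}_n-{\bf D})\Im\boldsymbol{\delta}=\Im z\,{\bf v}$ together with Lemma~\ref{lemma:ABx} and the tightness of $(\mu_N)$ is precisely the paper's route. Your Cauchy--Schwarz detour through $\widetilde{\bf D}$ is harmless (in fact $\widetilde{\bf D}={\bf D}$, since $\tr\bOmega_j{\bf T}\bOmega_k{\bf T}^*=\tr\bOmega_j{\bf T}^*\bOmega_k{\bf T}$), so Part~1 is fine.

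Part~2, however, has a real circularity that your last paragraph does not actually break. You propose to bound $\|{\bf B}-{\bf C}\|_\infty$ through $\max_k|\alpha_k-\delta_k|$, and to bound the latter by inverting the system \eqref{eq:alpha-delta} ``using Part~1''. But the only control you have on $({\bf I}_n-{\bf A})^{-1}$ is \eqref{eq:relation}, which requires a bound on $\|({\bf I}_n-{\bf B})^{-1}\|_\infty$ as well as on $\|({\bf I}_n-{\bf C})^{-1}\|_\infty$; Part~1 alone does not invert ${\bf I}_n-{\bf A}$. No crude a~priori bound helps either: $|\alpha_k-\delta_k|\le 2w_{\max}/|\Im z|$ is not $O(N^{-2})$, so the Neumann perturbation $({\bf I}_n-{\bf B})^{-1}=({\bf I}_n-{\bf C})^{-1}[{\bf I}_n-({\bf B}-{\bf C})({\bf I}_n-{\bf C})^{-1}]^{-1}$ cannot be made sub-critical from it.

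The paper avoids this by never touching $\alpha_k-\delta_k$ in Part~2. It repeats the Part~1 argument directly for ${\bf B}$: writing $\alpha_j=\epsilon_j+\tfrac{1}{n}\tr\bOmega_j{\bf R}_N$ with $\epsilon_j=\tfrac{1}{n}\tr\bOmega_j(\mathbb{E}{\bf Q}-{\bf R}_N)$, one gets the linear system
\[
\Im\alpha_j=\sum_k[{\bf B}]_{j,k}\,\Im\alpha_k+\Im z\cdot\tfrac{1}{n}\tr\bOmega_j{\bf R}_N{\bf R}_N^*+\Im\epsilon_j.
\]
The crucial point is that $|\epsilon_j|$ is bounded by the $\chi_1$ analysis (this is \eqref{eq:epsilon}), which is established \emph{before} and independently of Proposition~\ref{prop:BC}. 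Combined with a lower bound on $\tfrac{1}{n}\tr\bOmega_j{\bf R}_N{\bf R}_N^*$ coming from the tightness of the measures associated with $\tfrac{1}{N}\tr{\bf R}_N$, the source term stays strictly positive on the set $\mathcal{E}_N$, and Lemma~\ref{lemma:ABx} then yields $\rho({\bf B})<1$ together with the desired bound. In short: replace $\alpha_k-\delta_k$ by $\epsilon_k$ and run the same Perron--Frobenius argument as in Part~1, rather than perturbing from ${\bf C}$.
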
 
It follows from proposition \ref{prop:BC} that the spectral norm of ${\bf A}$ is strictly less than $1$. Thus, ${\bf I}_n-{\bf A}$ is invertible and for $z\in\mathcal{E}_N$,
\begin{align}
\|\left({\bf I}_n-{\bf A}\right)^{-1}\|_{\infty}& \leq\frac{1}{2}\left\|\left({\bf I}_n-{\bf B}\right)^{-1}\right\|_{\infty}+\frac{1}{2}\left\|\left({\bf I}_n-{\bf C}\right)^{-1}\right\|_{\infty}\nonumber\\
&\leq \frac{K_{\rm max}(\eta_{\rm max}^2+|z|^2)}{\left|\Im z\right|^4}, \label{eq:inequality}
\end{align}
where $K_{\rm max}=\max(K,\tilde{K})$ and $\eta_{\rm max}=\max(\eta,\tilde{\eta})$. Plugging \eqref{eq:inequality} into \eqref{eq:linear_system}, we obtain:
$$
\left\|\boldsymbol{\alpha}-\boldsymbol{\delta}\right\|_{\infty}\leq \frac{K_{\rm max}K_\epsilon}{N^2}(|z|+C_\epsilon)^{k_\epsilon}(\eta_{\rm max}+|z|^2)\frac{P_\epsilon(|\Im z|^{-1})}{|\Im z|^4},
$$
where the right hand side of the above inequality  can be put under the form:
$$
\frac{\overline{K}(\overline{C}+|z|^2)^k}{N^2} \overline{P}(|\Im z|^{-1}).
$$
for $\overline{K}$ and $\overline{C}$ positive constants, $\overline{k}$ integer, and $\overline{P}$ some polynomial with positive coefficients.
Consider now the case where $z\in\mathbb{C}_{+}\backslash\mathcal{E}_N$. We first remark that:
$$
\left|\alpha_j-\delta_j\right| \leq |\alpha_j|+|\delta_j|\leq\frac{2w_{\rm max}}{|\Im z|}.
$$
Since $z\notin \mathcal{E}_N$, we therefore have:
$$
\frac{1}{N^2}Q_1(|z|)Q_2(|\Im z|^{-1}) \geq \frac{1}{2}.
$$ 
Hence:
$$
\left\|\boldsymbol{\alpha}-\boldsymbol{\delta}\right\|_{\infty} \leq \frac{4w_{\rm max}}{|\Im z|N^2}Q_1(|z|)Q_2(|\Im z|^{-1})
$$
As a consequence, we can find for $C,K$ constants,  $k$ integer and $P$ polynomial with positive coefficients such that:
$$
\left\|\boldsymbol{\alpha}-\boldsymbol{\delta}\right\|_{\infty} \leq  \frac{K}{N^2} (|z|+C)^k P(|\Im z|^{-1}),
$$
thereby ending the proof.
\appendices
\section{Preliminaries}
Many of the results of the appendix part are based on the following key lemmas, which we recall in this section for sake of clarity.
\label{sec:prelim}
\begin{lemma}
\label{lemma:ABx}
	Let ${\bf A}=(a_{\ell,m})_{\ell,m=1}^n$ be an $n\times n$ real matrix and ${\bf u}$ and ${\bf v}$ be two real $n\times 1$ vectors. Assume that the entries of ${\bf A}$ are positive and that of ${\bf u}$ and ${\bf v}$ strictly positive. Assume, furthermore, that the equation:
	$$
	{\bf u}={\bf A}{\bf u}+{\bf v}
	$$
	is satisfied. Then, the spectral radius $\rho({\bf A})$ of ${\bf A}$ of ${\bf A}$ satisfies:
$$
\rho({\bf A})\leq 1-\frac{\min(v_{\ell})}{\max(u_\ell)} <1.
$$
\end{lemma}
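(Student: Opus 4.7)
The plan is to rewrite the given equation as $\mathbf{A}\mathbf{u}=\mathbf{u}-\mathbf{v}$ and then use $\mathbf{u}$ as a test vector in a Collatz--Wielandt style comparison argument for the nonnegative matrix $\mathbf{A}$. Componentwise the relation reads $[\mathbf{A}\mathbf{u}]_\ell = u_\ell - v_\ell$ for every $\ell$. Setting $\beta := 1-\min_\ell v_\ell/\max_\ell u_\ell$, the chain
$$
v_\ell \;\geq\; \min_k v_k \;=\; (1-\beta)\max_k u_k \;\geq\; (1-\beta)u_\ell
$$
shows that $[\mathbf{A}\mathbf{u}]_\ell \leq \beta\, u_\ell$ for all $\ell$, so we have the uniform componentwise inequality $\mathbf{A}\mathbf{u}\leq \beta\mathbf{u}$ with $0\leq\beta<1$ (strictness comes from $v_\ell>0$).

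Next I would iterate. Since $\mathbf{A}$ has nonnegative entries, left-multiplying $\mathbf{A}\mathbf{u}\leq\beta\mathbf{u}$ by $\mathbf{A}$ preserves the inequality, and by induction $\mathbf{A}^k\mathbf{u}\leq \beta^k\mathbf{u}$ for every $k\geq 1$. Because $\mathbf{u}$ has strictly positive coordinates, any $\mathbf{x}\in\mathbb{R}^n$ satisfies $|\mathbf{x}|\leq (\|\mathbf{x}\|_\infty/\min_\ell u_\ell)\mathbf{u}$ componentwise, and nonnegativity of $\mathbf{A}$ gives $|\mathbf{A}^k\mathbf{x}|\leq \mathbf{A}^k|\mathbf{x}|$. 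Combining,
$$
\|\mathbf{A}^k\mathbf{x}\|_\infty \;\leq\; \frac{\max_\ell u_\ell}{\min_\ell u_\ell}\,\beta^k\,\|\mathbf{x}\|_\infty,
$$
hence $\|\mathbf{A}^k\|_\infty\leq C\beta^k$ with $C$ independent of $k$.

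Finally, Gelfand's formula $\rho(\mathbf{A})=\lim_{k\to\infty}\|\mathbf{A}^k\|_\infty^{1/k}$ delivers $\rho(\mathbf{A})\leq\beta=1-\min_\ell v_\ell/\max_\ell u_\ell<1$, which is exactly the stated bound. I do not foresee a genuine obstacle here: the only place requiring care is producing the \emph{uniform} (in $\ell$) componentwise inequality $\mathbf{A}\mathbf{u}\leq \beta\mathbf{u}$, and that is precisely why $\beta$ must be built from $\min_\ell v_\ell$ in the numerator and $\max_\ell u_\ell$ in the denominator, so that the worst-case index is absorbed on both sides simultaneously.
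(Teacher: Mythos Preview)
Your argument is correct. The paper states this lemma as a preliminary result without supplying a proof, so there is nothing to compare against; your Collatz--Wielandt style derivation (bounding $\mathbf{A}\mathbf{u}\leq\beta\mathbf{u}$ componentwise, iterating, and invoking Gelfand's formula) is a clean and self-contained justification. One tiny remark: the claim $\beta\geq 0$ is used implicitly when you iterate, and it does hold because $u_\ell=[\mathbf{A}\mathbf{u}]_\ell+v_\ell>v_\ell$ forces $\max_\ell u_\ell>\min_\ell v_\ell$; you might state this explicitly. An equally short alternative, which bypasses the iteration entirely, is to take a left Perron vector $\mathbf{w}^{\mbox{\tiny T}}$ of $\mathbf{A}$ (positive entries, eigenvalue $\rho(\mathbf{A})$) and apply it to $\mathbf{u}=\mathbf{A}\mathbf{u}+\mathbf{v}$ to obtain $(1-\rho(\mathbf{A}))\mathbf{w}^{\mbox{\tiny T}}\mathbf{u}=\mathbf{w}^{\mbox{\tiny T}}\mathbf{v}$, whence $1-\rho(\mathbf{A})=\mathbf{w}^{\mbox{\tiny T}}\mathbf{v}/\mathbf{w}^{\mbox{\tiny T}}\mathbf{u}\geq \min_\ell v_\ell/\max_\ell u_\ell$.
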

\begin{lemma}[Matrix Inequality]
\label{lemma:jensen}
	Let ${\bf A}$ be a $n\times n$ hermitian matrix. Then,
	$$
	\frac{1}{n}\tr {\bf A}{\bf A}^{*} \geq \left|\frac{1}{n} \tr {\bf A}\right|^2
	$$
	with equality only if ${\bf A}$ is proportional to identity.
\end{lemma}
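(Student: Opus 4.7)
The plan is to recognize this as an instance of the Cauchy--Schwarz inequality in the Hilbert--Schmidt (Frobenius) inner product on the space of $n\times n$ matrices, namely $\langle {\bf X},{\bf Y}\rangle \triangleq \tr({\bf X}{\bf Y}^*)$. Applying Cauchy--Schwarz to the pair $({\bf A},{\bf I}_n)$ gives
$$
\left|\tr({\bf A}\,{\bf I}_n^*)\right|^2 \leq \tr({\bf A}{\bf A}^*)\,\tr({\bf I}_n{\bf I}_n^*) = n\,\tr({\bf A}{\bf A}^*),
$$
and dividing by $n^2$ yields the claimed inequality. Hermiticity is in fact not needed for the inequality itself, only for the equality characterization to admit a clean description in terms of eigenvalues.

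Alternatively, since ${\bf A}$ is Hermitian, I would invoke the spectral decomposition ${\bf A}={\bf U}\,\mathrm{diag}(\lambda_1,\dots,\lambda_n){\bf U}^*$ with ${\bf U}$ unitary and $\lambda_i\in\mathbb{R}$. Then $\frac{1}{n}\tr {\bf A}{\bf A}^* = \frac{1}{n}\sum_i \lambda_i^2$ and $\frac{1}{n}\tr {\bf A} = \frac{1}{n}\sum_i \lambda_i$, so the statement reduces to Jensen's inequality applied to the convex function $x\mapsto x^2$ under the uniform probability measure on $\{\lambda_1,\dots,\lambda_n\}$:
$$
\frac{1}{n}\sum_{i=1}^n \lambda_i^2 \;\geq\; \Bigl(\frac{1}{n}\sum_{i=1}^n \lambda_i\Bigr)^2.
$$

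For the equality case, Cauchy--Schwarz achieves equality iff ${\bf A}$ and ${\bf I}_n$ are linearly dependent, i.e., ${\bf A}=c\,{\bf I}_n$ for some scalar $c$; equivalently, in the Jensen formulation, equality forces $\lambda_1=\cdots=\lambda_n$, which combined with Hermiticity gives ${\bf A}$ proportional to the identity. There is no real obstacle here: the inequality is classical and the proof is a single line once one identifies the right inner-product framework. The only minor care is to state the equality condition correctly, which both approaches deliver directly.
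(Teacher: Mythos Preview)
Your proposal is correct, and your second argument---spectral decomposition followed by Jensen's inequality for $x\mapsto x^2$ on the eigenvalues---is exactly the paper's proof. Your additional Cauchy--Schwarz framing via the Hilbert--Schmidt inner product is a valid alternative that the paper does not use; it has the minor advantage of not requiring Hermiticity for the inequality itself, as you note, while the eigenvalue route makes the equality case (all $\lambda_i$ equal) more transparent.
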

\begin{proof}
	Let ${\bf A}={\bf U}\boldsymbol{\Lambda}{\bf U}^{\mbox{\tiny H}}$ be an eiengevalue decomposition of ${\bf A}$. Consider $\lambda_1,\cdots,\lambda_n$ the eigenvalues of ${\bf A}$. Then, if there is $i\neq j$ such that $\lambda_i \neq \lambda_j$, we have due to the strict-convexity of $x\mapsto x^2$:
	\begin{align*}
	\frac{1}{n}\tr {\bf A}{\bf A}^*&=\frac{1}{n}\sum_{i=1}^n \lambda_i^2 \\
									  &> \left|\frac{1}{n}\sum_{i=1}^n \lambda_i\right|^2
	\end{align*}
\end{proof}

\section{Proof of Theorem \ref{th:zero}}
\label{app:th_zero}
In order to establish that $0$ does not belong to the support $\mathcal{S}_N$, we show that it exists $\epsilon >0$ for which $\mu_N\left(\left[0,x\right]\right)=0$ for each $x\in \left]0,\epsilon\right[$. To this end, define function $\phi:\mathbb{R}_{+}^n\times \mathbb{R}^{+} \to \mathbb{R}_{+}^n$, with:
$$
\phi(x_1,\cdots,x_n,z)=\left(\phi_1(x_1,\cdots,x_n,z),\cdots,\phi_n(x_1,\cdots,x_n,z)\right)
$$
where $\phi_i:\mathbb{R}_{+}^n\times \mathbb{R}_{+}\to \mathbb{R}_+$  is given by:
$$
\phi_i(x_1,\cdots,x_n,z)=\frac{1}{n}\tr \bOmega_i\left(\frac{1}{n}\sum_{k=1}^n \frac{\bOmega_k}{1+x_k}-z I_N\right)^{-1}.
$$
We need to show that there exists  ${\ell}_1,\cdots,{\ell}_n$ such that:
$$
\phi_i({\ell}_1,\cdots,{\ell}_n,0)=\ell_i.
$$
Let $p\in\mathbb{N}$ and $r_p=-\frac{1}{p}$. We will first start by proving that for each $p$, there exists a unique $\overline{x}_1^{p},\cdots,\overline{x}_n^{p}$  such that: 
$$
\phi_i(\overline{x}_1^{p},\cdots,\overline{x}_n^{p},r_p)=\overline{x}_i^{p}.
$$
For that, it suffices   to show that $\tilde{\phi}_p:\mathbb{R}_{+}^n\to \mathbb{R}_{+}^n, (x_1,\cdots,x_n)\mapsto \phi(x_1,\cdots,x_n,r_p)$ is a standard interference function. In particular, we need to check that $\phi$ satisfy the following properties:
\begin{itemize}
\item Nonnegativity: For each $x_1,\cdots,x_n \geq 0$  and each $i$ and $p$, $\phi_i(x_1,\cdots,x_n,r_p) >0$.
\item  Monotonicity: For each $x_1\geq x_1^{'},\cdots,x_n\geq x_n^{'}$, and each $i$ and $p$,
$$
\phi_i\left(x_1,\cdots,x_n,r_p\right)\geq \phi_i\left(x_1^{'},\cdots,x_n^{'},r_p\right).
$$
\item Scalability: For each $\alpha >1$, and each $i$ and $p$, $\alpha \phi_i(x_1,\cdots,x_n,r_p) > \phi_i(\alpha x_1,\cdots,\alpha x_n,r_p)$.
\end{itemize}
The first item is obvious since $\bOmega_i$ are positive definite matrices, while the second one follows from the fact that for positive definite matrices, ${\bf A}\succeq {\bf B}$ implies ${\bf B}^{-1}\succeq {\bf A}^{-1}$. Finally, to prove the last item, note that for $\alpha >1$,
\begin{align*}
\phi_i(\alpha x_1,\cdots,\alpha x_n,r_p)&< \frac{1}{N}\tr \bOmega_i\left(\frac{1}{n}\sum_{k=1}^n \frac{\bOmega_k}{\alpha(1+x_k)}-\frac{r_p}{\alpha} I_N\right)^{-1} \\
&=\alpha \phi_i(x_1,\cdots,x_n,r_p).
\end{align*} 
Therefore, 
$$
\phi_i(\alpha x_1,\cdots,\alpha x_n,r_p) > \alpha \phi_i(x_1,\cdots,x_n,r_p).
$$
According to \cite[Theorem 2]{YAT95}, $\tilde{\phi}_p$ is a standard interference function. To prove that there exists a unique $\overline{x}_1^p,\cdots,\overline{x}_n^p$ satisfying:
$$
\overline{x}_i^p=\phi_i(\overline{x}_1^p,\cdots,\overline{x}_n^p),
$$
we need to check that there exits $x_1,\cdots,x_n$ such that:
$$
x_i > \phi_i(x_1,\cdots,x_n,r_p).
$$
This condition holds true, since $\phi_i(x_1,\cdots,x_n)\leq \frac{1}{r_p}$, and so increasing $x_i$ to infinity will satisfy the above inequality. 

Moreover, consider the sequence:
$$
x_i^{(t,p)}=\phi_i(x_1^{(t-1,p)},\cdots,x_n^{(t-1,p)}), \hspace{0.5cm} i=1,\cdots,n
$$ 
where $x_1^{(0,p)},\cdots,x_n^{(0,p)}$ are arbitrary positive reals. Then, ${\bf x}^{(t,p)}=\left(x_1^{(t,p)},\cdots,x_n^{(t,p)}\right)$ converge to $\overline{\bf x}^{p}=\left(\overline{x}_1^{p},\cdots,\overline{x}_n^{p}\right)$. 

From this, we can prove that for $p\geq q$, we have for each $i\in \left\{1,\cdots,n\right\}$,
$$
\overline{x}_i^{p}\geq \overline{x}_i^{q}.
$$
To this end, we will consider the sequence, 
$$
x_i^{(t,p)}=\phi_i(x_1^{(t-1,p)},\cdots,x_n^{(t-1,p)}), \hspace{0.5cm} i=1,\cdots,n
$$ 
where $x_i^{(0,p)}=\overline{x}_i^{q}$ and will show that for any $t$,
$$
x_i^{(t,p)}\geq \overline{x}_i^{q}.
$$
We will proceed by induction on $t$. For $t=0$, the result obviously holds. Assume that the resuld holds for any $k\leq t$, i.e, 
$$
x_i^{(k,p)}\geq \overline{x}_i^{q}, \hspace{0.5cm} i=1,\cdots,n \hspace{0.1cm} \textnormal{and} \hspace{0.1cm} k\leq t.
$$
And let us prove it for $t=k+1$. We have:
\begin{align*}
x_i^{(t+1,p)}&=\phi_i(x_1^{(t,p)},\cdots,x_n^{(t,p)},r_p) \\
&\geq \phi_i(x_1^{(t,p)},\cdots,x_n^{(t,p)},r_q)\\ 
&\stackrel{(a)}{\geq} \phi_i(\overline{x}_1^{q},\cdots,\overline{x}_n^{q},r_q) \\
&=\overline{x}_i^{q}.
\end{align*}
where $(a)$ follows since $\phi_i$ is increasing in each variable and $x_i^{(t,p)}\geq \overline{x}_i^{(q)}$ by the induction assumption.

We have therefore shown that for $p\geq q$, 
$$
\overline{x}_i^{p}\geq \overline{x}_i^{q}.
$$
As $p$ tends to infinity, $\overline{x}_i^{p}$ will converge to a limit $\ell_i \in \mathbb{R}_{+}\cup\left\{+\infty\right\}$. Assume that for $i \in \left\{1,\cdots,n\right\}$, $\ell_i \neq +\infty$. Then, one can easily see, that necessarily, $\ell_i \neq +\infty$ for any $i\in \left\{1,\cdots,n\right\}$. We will prove now, that the case of $\ell_i=+\infty$ for all $i=1,\cdots,n$ cannot hold. For this observe that:
\begin{align*}
\sum_{i=1}^n \frac{\overline{x}_i^{p}}{1+\overline{x}_i^{p}}&=\sum_{i=1}^n \frac{1}{n}\tr \frac{\bOmega_i}{1+\overline{x}_i^p}\left(\frac{1}{n}\sum_{k=1}^n \frac{\bOmega_k}{1+x_k^{p}}+r_p {\bf I}_N\right)^{-1}\leq N.
\end{align*}
Let $\overline{x}_{\rm min}^{p}=\min_{1\leq i \leq n} \overline{x}_i^{p}$. 
We have thus:
$$
\frac{\overline{x}_{\rm min}}{1+\overline{x}_{\rm min}}\leq \frac{N}{n}
$$
or equivalently:
$$
\overline{x}_{\rm min} \leq \frac{\frac{N}{n}}{1-\frac{N}{n}}.
$$
which is contradiction with the fact that $\ell_i=+\infty$ for all $i$. 

Recall now that:
$$
\phi_i(\overline{x}_1^{p},\cdots,\overline{x}_n^{p},r_p)=\overline{x}_i^{p}.
$$
Taking the limit in $p$, we thus get that:
$$
\phi_i(\ell_1,\cdots,\ell_n,0) =\ell_i, 
$$
or equivalently:
$$
\frac{1}{n}\tr \bOmega_i\left(\frac{1}{n}\sum_{k=1}^n \frac{\bOmega_k}{1+\ell_k}\right)^{-1}=\ell_i.
$$

The Jakobian matrix corresponding to $\tilde{\phi}_{\infty}:\mathbb{R}_{+}^n\to\mathbb{R}_+^{n}: (x_1,\cdots,x_n)\mapsto \phi(x_1,\cdots,x_n,0)$  at $x_i=\ell_i, i=1,\cdots,n$, is given by:
$$
\left[{\bf J}\right]_{i,m}=\frac{1}{n^2}\tr\bOmega_i \left(\frac{1}{n}\sum_{k=1}^n \frac{\bOmega_k}{1+\ell_k}\right)\frac{\bOmega_m}{(1+\ell_m)^2}\left(\frac{1}{n}\sum_{r=1}^n \frac{\bOmega_r}{1+\ell_r}\right)^{-1}
$$
Let ${\bf u}=\left[1+\ell_1,\cdots,1+\ell_n\right]^{\mbox{\tiny T}}$ and ${\bf v}=\left[\ell_1,\cdots,\ell_n\right]^{\mbox{\tiny T}}$.
 Then, after simple calculations, one can show that:
$$
{\bf J}{\bf u}={\bf v}.
$$
The entries of ${\bf J}$, ${\bf u}$ and ${\bf v}$ are strictly positive. A direct application of Lemma \ref{lemma:ABx} in section \ref{sec:prelim} implies that:
$$
\rho({\bf J})\leq 1-\frac{\min_{1\leq i\leq n} \ell_i}{1+\max_{1\leq i \leq n} \ell_i}<1.
$$
thereby showing that ${\bf I}_n-{\bf J}$ is invertible. Hence, the implicit function theorem implies that there exists an open disk at zero with radius $\eta>0$, i.e $D(0,\eta)$ and unique analytic functions $\varphi_1,\cdots,\varphi_n$ defined in $D(0,\eta)$ such that:
$$
\phi_i\left(\varphi_1(z),\cdots,\varphi_n(z),z\right)=\varphi_i(z)
$$
and 
$$
\varphi_i(0)=\ell_i,\hspace{0.2cm} i=1,\cdots,n.
$$
On the other hand, one can show that there exists $\epsilon >0$ such that $\varphi_i(t)$ is real valued and strictly positive for any $t\in \left[-\epsilon,\epsilon\right]$.  Indeed, writing $\Im\varphi_i(t)$ as:
\begin{align*}
\Im \varphi_i(t)&=\frac{1}{2\imath}\left(\frac{1}{n}\tr\bOmega_i\left(\frac{1}{n}\sum_{k=1}^n \frac{\bOmega_k}{1+\varphi_k(t)}-t{\bf I}_N\right)^{-1} \right.\\
&\left.- \frac{1}{n}\tr\bOmega_i\left(\frac{1}{n}\sum_{k=1}^n \frac{\bOmega_k}{1+\varphi_k^{*}(t)}-t {\bf I}_N\right)^{-1}\right)\\
&=\frac{1}{n}\tr \bOmega_i \left(\frac{1}{n}\sum_{k=1}^n \frac{\bOmega_k}{1+\varphi_k(t)}-t{\bf I}_N\right)^{-1}\\
 &\times  \left(\frac{1}{n}\sum_{k=1}^n \frac{\bOmega_k \Im (\varphi_k(t))}{\left|1+\varphi_k(t)\right|^2}\right)\left(\frac{1}{n}\sum_{k=1}^n \frac{\bOmega_k}{1+\varphi_k(t)}-t{\bf I}_N\right)^{-1}.
\end{align*}
Therefore, the vector ${\bf g}_t=\left[\Im(\varphi_1(t)),\cdots,\Im(\varphi_n(t))\right]^{\mbox{\tiny T}}$ is solution of the following system of equations:
$$
{\bf g}_t={\bf J}_t {\bf g}_t.
$$
As $t\mapsto\rho({\bf J}_t)$ is  continuous, and since for $t=0$, $\rho({\bf J}_t)=\rho({\bf J})<1$, there exists $\epsilon >0$ such that: 
$$
\rho({\bf J}_t) <1
$$
for every $t\in\left[-\epsilon,\epsilon\right]$. Therefore, ${\bf g}_t=0$. 
Furthermore, since at $t=0$, $\varphi_i(0)=\ell_i>0$, we can futher assume that $\epsilon$ is chosen such that $\varphi_i(t)$ is real-valued and strictly positive for any $t\in\left[-\epsilon,\epsilon\right]$. From \cite[Theorem 1]{WAG10}, we know that for $t<0$,  $\delta_1(t),\cdots,\delta_n(t)$ are the unique non-negative pointwise solutions of the following system of equations
$$
\delta_i(t)=\phi_i(\delta_1(t),\cdots,\delta_n(t),t),  
$$
thereby implying that:
$$
\delta_i(t)=\varphi_i(t) 
$$
for any $t\in\left[-\epsilon,0\right]$. Since, the set of functionals $\delta_1(t),\cdots,\delta_n(t)$ and $\varphi_1(t),\cdots,\varphi_n(t)$ are holomorphic on $D(0,\epsilon)\backslash\left\{\left[0,\epsilon\right[\right\}$ and coincide on a set of values with an accumulation point, they must coincide on the whole domaine of analicity, namely $D(0,\epsilon)\backslash\left\{\left[0,\epsilon\right[\right\}$.

Let $\overline{m}$ be given by:
$$
\overline{m}=\frac{1}{N}\tr \left(\frac{1}{n}\sum_{k=1}^n \frac{\bOmega_k}{1+\varphi_k(z)} - z{\bf I}_N\right)^{-1}.
$$
Obviously $\overline{m}$ is analytic on $D(0,\epsilon)$ and satisfies:
$$
\overline{m}(z)=m_N(z)
$$
for all $z\in D(0,\epsilon)\backslash\left\{\left[0,\epsilon\right[\right\}$.
We recall that for $0\leq x<\epsilon$, $\mu_N\left(\left[0,x\right]\right)$ can be expressed as:
$$
\mu_N\left(\left[0,x\right]\right)=\frac{1}{\pi}\lim_{y\to 0, y>0} \int_0^x \Im(m_N(s+\imath y)) ds.
$$
Therefore,
$$
\mu_N\left(\left[0,x\right]\right)=\frac{1}{\pi}\lim_{y\to 0, y>0} \int_0^x \Im(\overline{m}(s+\imath y)) ds
$$
As $\overline{m}$ is holomorphic on $D(0,\epsilon)$, the dominated convergence theorem implies that:
$$
\frac{1}{\pi}\lim_{y\to 0, y>0} \int_0^x \Im(\overline{m}(s+\imath y)) ds =\frac{1}{\pi}\int_0^x \Im (\overline{m}(s))ds =0
$$
since $\overline{m}(s)\in\mathbb{R}$ for $s\in \left[0,x\right]$. Thus, we establish that $\mu_N\left(\left[0,x\right]\right)=0$.

 \section{Proof of lemma \ref{lemma:var_A}}
\label{app:variance}
The proof follows from a direct application of the Nash-Poincar\'e inequality in Lemma \ref{lemma:nash}. Define $\beta_{\bf A}=\frac{1}{n}\tr{\bf A}{\bf Q}(z)$. We then have:
\begin{small}
\begin{align*}
{\rm var}(\beta_{\bf A}(z)) &\leq \sum_{k=1}^n \sum_{s=1}^N\sum_{r=1}^N \frac{1}{n^4} \mathbb{E}\left[\left[\bSigma_n^*{\bf Q}{\bf A}{\bf Q}\right]_{k,s} \left[\bOmega_k\right]_{s,r} \left[{\bf Q}^{*}{\bf A}^{*}{\bf Q}^{*}\bSigma_n\right]_{r,k}\right] \\
&+\sum_{k=1}^n \sum_{s=1}^N\sum_{r=1}^N \frac{1}{n^4} \mathbb{E}\left[\left[{\bf Q}^{*}{\bf A}^{*}{\bf Q}^{*}\bSigma_n\right]_{k,s} \left[\bOmega_k\right]_{s,r}\left[\bSigma_n^*{\bf Q}{\bf A}{\bf Q}\right]_{r,k}\right]\\
&=\sum_{k=1}^n \frac{1}{n^4}\mathbb{E}\left[\left[\bSigma_n^* {\bf Q}{\bf A}{\bf Q}\bOmega_k{\bf Q}^*{\bf A}^* {\bf Q}^{*}\bSigma_n\right]_{k,k}\right]\\
&+\sum_{k=1}^n \frac{1}{n^4}\mathbb{E}\left[\left[{\bf Q}^*{\bf A}^{*}{\bf Q}^* \bSigma_n \bOmega_k\bSigma_n^* {\bf Q}{\bf A}{\bf Q}\right]_{k,k}\right].
\end{align*}
\end{small}
Since $\bOmega_k \preceq w_{\rm max}{\bf I}_N$ with $w_{\rm max}=\sup_N\max_{1\leq k\leq n}\left\|\bOmega_k\right\|$,
we have:
\begin{align*}
	{\rm var}(\beta_{\bf A})(z) &\leq \frac{w_{\rm max}}{n^3}\tr \left({\bf Q}{\bf A}{\bf Q}{\bf Q}^*{\bf A}^*{\bf Q}^*\frac{\bSigma_n\bSigma_n^*}{n}\right) \\
												  &+\frac{w_{\rm max}}{n^3}\tr \left({\bf Q}^*{\bf A}^*{\bf Q}^*\frac{\bSigma_n\bSigma_n^*}{n}{\bf Q}{\bf A}{\bf Q}\right).
\end{align*}
Using the resolvent identity:
$$
{\bf Q}(z)\frac{\bSigma_n\bSigma_n^*}{n} = \frac{\bSigma_n\bSigma_n^*}{n} {\bf Q}(z)={\bf I}_N+z{\bf Q}(z),
$$ 
and the inequality $\|{\bf Q}(z)\|\leq \frac{1}{\left|\Im(z)\right|}$, we obtain:
\begin{align*}
	{\rm var}(\beta_{\bf A}(z))& \leq \frac{2w_{\rm max}\|{\bf A}\|^2}{n^2} \left(\frac{1}{\left|\Im(z)\right|^3} +\frac{\left|z\right|}{\left|\Im(z)\right|^4}\right) \\
												 &\leq \frac{2w_{\rm max}\|{\bf A}\|^2}{n^2} \left(\left|z\right|+1\right)\left(\frac{1}{\left|\Im(z)\right|^4}+\frac{1}{\left|\Im(z)\right|^3}\right).
\end{align*}

\section{Proof of proposition \ref{prop:BC}}
\label{app:BC}
In order to prove proposition \ref{prop:BC}, we need first to show that the sequence of measures $\mu_N$ is tight. To this end, we will follow the same steps as in \cite[Lemma C1]{HAC07}. Observe that:
\begin{align}
\int_0^{+\infty} \lambda\mu_N(d\lambda)&= \lim_{y\to+\infty} \Re\left[-\imath y\left(\imath y m_N(\imath y)+1\right)\right]\nonumber\\
&=\lim_{y\to+\infty} \Re \left[-\imath y\left(\imath y \frac{1}{N}\tr {\bf T}_N(\imath y)+1\right)\right].\label{eq:previous}
\end{align}
On the other hand:
$$
{\bf T}_N(\imath y)\left(\frac{1}{n}\sum_{k=1}^n \frac{\bOmega_k}{1+\delta_k(\imath y)}-\imath y{\bf I}_N\right)={\bf I}_N.
$$
Therefore,
\begin{align}
1+\frac{1}{N}\tr \imath y {\bf T}_N(\imath y)&= \frac{1}{n}\sum_{k=1}^n \frac{1}{N}\frac{\tr \bOmega_k {\bf T}_N(\imath y)}{1+\delta_k(\imath y)}\nonumber\\
&=\frac{1}{n}\sum_{k=1}^n \frac{1}{c_N}\frac{\delta_k(\imath y)}{1+\delta_k(\imath y)}. \label{eq:above}
\end{align}
Plugging \eqref{eq:above} into \eqref{eq:previous}, we finally get:
$$
\int_0^{+\infty} \lambda \mu_N(d\lambda)=\lim_{y\to+\infty} \frac{1}{n}\frac{1}{c_N}\sum_{k=1}^n \frac{\Re\left[-\imath y \delta_k(\imath y)\right]}{\left|1+\delta_k(iy)\right|^2}.
$$
Since $\delta_k$ are Stieltjes transforms of finite positive measures, we have:
$$
\lim_{y\to+\infty}\left|\delta_k(\imath y)\right| =0
$$
Moreover, we have $\lim_{y\to+\infty} -\imath y \delta_k(\imath y)= \frac{1}{n}\tr \bOmega_k$, thereby establishing that:
$$
\sup_{N}\int_0^{+\infty} \lambda \mu_N(d\lambda) < +\infty.
$$
The tightness of the sequence $\mu_N$ follows directly from the above inequality. In the same way, we can also show that the sequence of measures corresponding to the Stieltjes transforms $\frac{1}{N}\tr {\bf R}$ is also tight. These two results will be of fundamental importance in the proof of proposition \ref{prop:BC}. 

We now return to the proof of proposition\ref{prop:BC}:
	\begin{paragraph}{Proof of proposition \ref{prop:BC}-1)}
The proof is based on the use of Lemma \ref{lemma:ABx} in section \ref{sec:prelim}. For that, we need to find a linear system involving matrix ${\bf C}$. 
For $z\in \mathbb{C}_{+}$, we have:
	\begin{align*}
		\Im(\delta_j)&=\frac{1}{2\imath n}\left(\tr \bOmega_j{\bf T}-\tr \bOmega_j{\bf T}^{\mbox{\tiny H}}\right)\\
												   &=\frac{1}{n}\tr \bOmega_j{\bf T}\left(\frac{1}{n}\sum_{k=1}^n \frac{\bOmega_k \Im(\delta_k)}{\left|1+\delta_k\right|^2} +\Im(z){\bf I}_N\right){\bf T}^{\mbox{\tiny H}}\\
												   &=\frac{1}{n^2}\sum_{k=1}^n \frac{\tr \bOmega_j {\bf T}\bOmega_k {\bf T}^{\mbox{\tiny H}}}{\left|1+\delta_k\right|^2} \Im(\delta_k)+\Im(z)\frac{1}{n}\tr \bOmega_j {\bf T}{\bf T}^{\mbox{\tiny H}}.
	\end{align*}
	Let ${\bf I}_\delta$ and ${\bf c}$ be the $n\times 1$ vectors given by:
	\begin{align*}
		{\bf I}_{\delta}&=\left[\Im(\delta_1),\cdots,\Im(\delta_n)\right]^{\mbox{\tiny T}}\\
		{\bf c}&=\left[\frac{1}{n}\tr \bOmega_1{\bf T}{\bf T}^{\mbox{\tiny H}},\cdots,\frac{1}{n}\tr \bOmega_n{\bf T}{\bf T}^{\mbox{\tiny H}}\right]^{\mbox{\tiny T}},
\end{align*}
	Then:
	$$
	{\bf I}_{\delta}={\bf C} \hspace{0.04cm}{\bf I}_{\delta}+\Im(z){\bf c}.
	$$
	Since $\Im(\delta_j)> 0$ for all $j$ and $\Im z>0$ and ${\bf C}$, ${\bf c}$ have positive entries,  we get from Lemma \ref{lemma:ABx},
	\begin{align*}
		\left\|\left({\bf I}_n-{\bf C}\right)^{-1}\right\|_{\infty} &\leq \frac{\max_{1\leq j\leq n}\Im\delta_j}{\Im z \min_{1\leq j\leq n}\frac{1}{n}\tr \bOmega_j {\bf T}{\bf T}^{\mbox{\tiny H}}} \\
																	&\leq \frac{w_{\rm max}}{\left|\Im z\right|^2 \min_{1\leq j\leq n}\frac{1}{n}\tr \bOmega_j {\bf T}{\bf T}^{\mbox{\tiny H}}},
\end{align*}
where the second inequality follows from the fact that $\max_{1\leq j\leq n}\Im\delta_j\leq \max_{1\leq j\leq n}\left|\delta_j\right|\leq\frac{w_{\rm max}}{\Im z}$.
Using the inequality $\frac{1}{n}\tr {\bf A}{\bf B}\geq \lambda_1({\bf A})\frac{1}{n}\tr {\bf B}$ for ${\bf A}$ and ${\bf B}$ hermitian positive definite matrices with $\lambda_1({\bf A})$ the smallest eigenvalue of ${\bf A}$, we get:
\begin{equation}
\left\|\left({\bf I}_n-{\bf C}\right)^{-1}\right\|_{\infty}\leq \frac{w_{\rm max}}{\left|\Im z\right|^2w_{\rm min}\frac{1}{n}\tr {\bf T}{\bf T}^{\mbox{\tiny H}}}.
\label{eq:C}
\end{equation}
	In order to obtain a lower bound on $\frac{1}{N}\tr {\bf T}{\bf T}^{\mbox{\tiny H}}$, we first remark that by the Jensen inequality in Lemma \ref{lemma:jensen}: $\frac{1}{N}\tr {\bf T}{\bf T}^{\mbox{\tiny H}} \geq \left|\frac{1}{N}\tr {\bf T}\right|^2= \left|m_N(z)\right|^2 \geq \Im (m_N(z))^2$. As $\left(\mu_N\right)_{N\geq 0}$ is tight, it exists $\eta >0$ for which $\mu_N(\left[\eta,+\infty\right)) \leq \frac{1}{2}$ for all $N$ and as such:
	$$
	\mu_N\left(\left[0,\eta\right]\right)\geq \frac{1}{2}.
	$$
	As a consequence,
	\begin{align}
		\Im(m_N(z))&=\Im(z)\int_{0}^{+\infty}\frac{d\mu_N(\lambda)}{\left|\lambda-z\right|^2} > \int_{0}^{\eta} \frac{\Im(z)d\mu_N(\lambda)}{2(\eta^2+|z|^2)}\mu_N(\left[0,\eta\right]) \nonumber\\
								  &\geq \frac{\Im(z)}{4(\eta^2+|z|^2)}.\label{eq:final}
\end{align}
Plugging \eqref{eq:final} into \eqref{eq:C}, we finally get \eqref{eq:first_desired_1}.
\end{paragraph}
\begin{paragraph}{Proof of proposition \ref{prop:BC}-2)}
	The proof is similar to that of the first statement. We first decompose $\alpha_j$ as:
$$
\alpha_j=\alpha_j-\frac{1}{n}\tr \bOmega_j{\bf R} +\frac{1}{n}\tr \bOmega_j{\bf R} =\epsilon_j +\frac{1}{n}\tr \bOmega_j{\bf R}. 
$$
Hence, 
	$$
	\Im (\alpha_j)=\Im(\epsilon_j(z)) +\Im \left(\frac{1}{n}\tr \bOmega_j{\bf R}\right).
	$$
	Using the same kind of calculations as above, we thus get:
	\begin{align}
		\Im (\alpha_j)&=\Im(\epsilon_j)+\frac{1}{n^2}\sum_{k=1}^n \frac{\tr\bOmega_j{\bf R}\bOmega_k{\bf R}^{\mbox{\tiny H}}\Im \alpha_k}{\left|1+\alpha_k(z)\right|^2} +\Im(z)\frac{1}{n}\tr \bOmega_j{\bf R}{\bf R}^{\mbox{\tiny H}}.\label{eq:alpha}
\end{align}
In order to determine a subset of $\mathbb{C}_{+}$ on which $\Im(z)\frac{1}{n}\tr \bOmega_j{\bf R}{\bf R}^{\mbox{\tiny H}} +\Im(\epsilon_j(z)) > 0$, we evaluate a lower bound of $\frac{1}{n}\tr \bOmega_j{\bf R}{\bf R}^{\mbox{\tiny H}}$. We have by the Jensen inequality in Lemma \ref{lemma:jensen}:
$$
\frac{1}{n}\tr \bOmega_j {\bf R}{\bf R}^* \geq w_{\rm min}\left|\frac{1}{n}\tr {\bf R}\right|^2=w_{\rm min}\left(\frac{N}{n}\right)^2\left|\frac{1}{N}\tr {\bf R}\right|^2.
$$
From the discussion in the beginning of this section,  we know that the sequence of measures corresponding to the Stieltjes transforms $\frac{1}{N}\tr {\bf R}$ is tight. Hence, there exists $\tilde{\eta}$ such that:
$$
\Im \left(\frac{1}{N} \tr {\bf R}\right) \geq \frac{\Im z}{4\left(\tilde{\eta}^2+\left|z\right|^2\right)}.
$$
Hence,
$$
\frac{1}{n}\tr \bOmega_j{\bf R}{\bf R}^* \geq w_{\rm min} \left(\frac{N}{n}\right)^2\frac{\left|\Im z\right|^2}{16\left(\tilde{\eta}^2+\left|z\right|^2\right)^2}.
$$
On the other hand, from \eqref{eq:epsilon}, we recall that:
$$
\left|\epsilon_j(z)\right| \leq \frac{K_\epsilon}{N^2}(|z|+C_\epsilon)^{k_\epsilon}P_\epsilon(|\Im z|^{-1}).
$$
Consider $\mathcal{E}_{N,1}$ the set given by:
\begin{align*}
\mathcal{E}_{N,1}&=\left\{z\in \mathbb{C}_{+}, \frac{w_{\rm min }\left(\frac{N}{n}\right)^2\left|\Im z\right|^2}{16(\tilde{\eta}^2+\left|z\right|^2)^2}\right.\\
&\left. -\frac{K_\epsilon}{N^2}(|z|+C_\epsilon)^{k_\epsilon}P_\epsilon(|\Im z|^{-1}) > 0\right\}
\end{align*}
Then, as before, using the fact that for $z\in\mathcal{E}_{N,1}$ \eqref{eq:alpha} can be cast into a  linear system of equations involving positive-entries matrix and vectors, we deduce that $\rho({\bf B}) <1$ and:
\begin{align*}
	&	\left\|\left({\bf I}_n-{\bf B}\right)^{-1}\right\|_{\infty} \leq \frac{\max_{1\leq j\leq n}\alpha_j}{\frac{w_{\rm min}N^2}{n^2}\frac{\left|\Im z\right|^3}{16(\tilde{\eta}^2+|z|^2)^2}-\frac{K_\epsilon}{N^2}(|z|+C_\epsilon)^{k_\epsilon}P_\epsilon(|\Im z|^{-1})}\\
																																																										   &\leq \frac{1}{\frac{w_{\rm min}N^2}{n^2}\frac{\left|\Im z\right|^4}{16\left(\tilde{\eta}^2+|z|^2\right)^2}\left(1-\frac{1}{N^2}Q_1(|z|)Q_2(\left|\Im z\right|^{-1})\right)},
\end{align*}
where $Q_1$ and $Q_2$ are polynomials with positive coefficients. 

	Take $\mathcal{E}_N$ as the set defined by:
			$$
			\mathcal{E}_N=\left\{z\in\mathbb{C}_{+},\frac{1}{N^2}Q_1\left(|z|\right)Q_2\left(\Im z^{-1}\right)\leq \frac{1}{2}\right\}.
			$$ 
			Obviously $\mathcal{E}_N\subseteq \mathcal{E}_{N,1}$, and for all $z\in\mathcal{E}_N$, we get:
			$$
		\left\|\left({\bf I}_n-{\bf B}\right)^{-1}\right\|_{\infty} \leq \frac{32n^2\left(\tilde{\eta}^2+|z|^2\right)^2}{{w_{\rm min}N^2}\left|\Im z\right|^4}.
			$$
		\end{paragraph}

\section{A linear algebraic result}
\label{app:algebraic}
Finally, we finish the Appendix part with a linear algebraic lemma which we need in our derivation and can be of independent interest.
\begin{lemma}
	Let ${\bf B}$ and ${\bf C}$ be $n\times n$ matrices with non-negative entries. 
	Let ${\bf A}$ be a $n\times n$ matrix satisfying:
\begin{equation}
	\left|\left[{\bf A}\right]_{i,j}\right|\leq 	\sqrt{\left[{\bf B}\right]_{i,j}}	\sqrt{\left[{\bf C}\right]_{i,j}}.
\label{eq:strict}
	\end{equation}
	Then, $\rho({\bf A})\leq \sqrt{\rho({\bf B})}\sqrt{\rho({\bf C})}$. 
 If furthermore $\max(\rho({\bf A}),\rho({\bf B}))<1$, then  $\rho({\bf A})<1$ and:
	$$
	\left\|	\left({\bf I}_n-{\bf A}\right)^{-1}\right\|_{\infty} \leq \sqrt{	\left\|	\left({\bf I}_n-{\bf B}\right)^{-1}\right\|_{\infty}}\sqrt{	\left\|	\left({\bf I}_n-{\bf C}\right)^{-1}\right\|_{\infty}}
	$$
\end{lemma}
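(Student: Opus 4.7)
The plan is to reduce everything to non-negative matrices via the standard inequality $\rho(\mathbf{A})\leq \rho(|\mathbf{A}|)$, where $|\mathbf{A}|$ denotes the entrywise absolute-value matrix. Since both $\mathbf{B}$ and $\mathbf{C}$ already have non-negative entries, and the hypothesis gives the entrywise bound $|\mathbf{A}|_{i,j}\leq \sqrt{[\mathbf{B}]_{i,j}}\sqrt{[\mathbf{C}]_{i,j}}$, it suffices to control the entries of $|\mathbf{A}|^k$ in terms of those of $\mathbf{B}^k$ and $\mathbf{C}^k$, and then use Gelfand's formula $\rho(\mathbf{M})=\lim_{k\to\infty}\|\mathbf{M}^k\|_\infty^{1/k}$.

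The core combinatorial step is to expand $(|\mathbf{A}|^k)_{i,j}$ as a sum over length-$k$ paths $i\to i_1\to\cdots\to i_{k-1}\to j$. Applying the hypothesis to each factor and then Cauchy--Schwarz to the sum over paths yields
$$(|\mathbf{A}|^k)_{i,j}\leq \sum_{i_1,\ldots,i_{k-1}}\sqrt{[\mathbf{B}]_{i,i_1}\cdots [\mathbf{B}]_{i_{k-1},j}}\sqrt{[\mathbf{C}]_{i,i_1}\cdots [\mathbf{C}]_{i_{k-1},j}}\leq \sqrt{(\mathbf{B}^k)_{i,j}(\mathbf{C}^k)_{i,j}}.$$
A second Cauchy--Schwarz, this time on the row sum $\sum_j$, gives $\||\mathbf{A}|^k\|_\infty\leq \sqrt{\|\mathbf{B}^k\|_\infty\|\mathbf{C}^k\|_\infty}$. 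Extracting the $k$-th root and sending $k\to\infty$ produces $\rho(|\mathbf{A}|)\leq \sqrt{\rho(\mathbf{B})\rho(\mathbf{C})}$, and combining with $\rho(\mathbf{A})\leq \rho(|\mathbf{A}|)$ proves the first assertion.

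For the second assertion, assume $\max(\rho(\mathbf{B}),\rho(\mathbf{C}))<1$ (this is the natural hypothesis; the statement as written appears to contain a typo). Then $\rho(\mathbf{A})<1$ by the first part, so all three Neumann series $(\mathbf{I}_n-\mathbf{M})^{-1}=\sum_{k\geq 0}\mathbf{M}^k$ (for $\mathbf{M}=\mathbf{A},\mathbf{B},\mathbf{C}$) converge. The entrywise bound proved above, followed by Cauchy--Schwarz applied this time to the index $k$, yields
$$\bigl|[(\mathbf{I}_n-\mathbf{A})^{-1}]_{i,j}\bigr|\leq \sum_{k\geq 0}\sqrt{(\mathbf{B}^k)_{i,j}(\mathbf{C}^k)_{i,j}}\leq \sqrt{[(\mathbf{I}_n-\mathbf{B})^{-1}]_{i,j}}\sqrt{[(\mathbf{I}_n-\mathbf{C})^{-1}]_{i,j}}.$$
A final Cauchy--Schwarz on the row sum $\sum_j$ delivers the claimed inequality on $\|\cdot\|_\infty$.

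The only delicate point is being disciplined about invoking Cauchy--Schwarz at three nested levels (over the intermediate path indices, over the column index, and over the power index) and verifying that the non-negativity of $\mathbf{B}$ and $\mathbf{C}$ keeps every quantity under the square root non-negative so that the bounds assemble cleanly. The monotonicity $\rho(\mathbf{A})\leq \rho(|\mathbf{A}|)$ is a classical consequence of Gelfand's formula applied to $\mathbf{A}$ together with the entrywise inequality $|\mathbf{A}^k|\leq |\mathbf{A}|^k$, so no Perron--Frobenius machinery beyond this elementary observation is required.
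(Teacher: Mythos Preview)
Your proof is correct and follows essentially the same route as the paper: reduce to $|\mathbf{A}|$, expand powers as sums over paths, apply Cauchy--Schwarz to obtain $[|\mathbf{A}|^k]_{i,j}\le\sqrt{[\mathbf{B}^k]_{i,j}[\mathbf{C}^k]_{i,j}}$, and then pass to Gelfand's formula and the Neumann series. The only cosmetic difference is that in the second part you apply Cauchy--Schwarz first over $k$ (yielding the slightly stronger entrywise bound on $(\mathbf{I}_n-\mathbf{A})^{-1}$) and then over $j$, whereas the paper sums over $j$ first and $k$ second; both orderings give the same final inequality, and your observation that the hypothesis should read $\max(\rho(\mathbf{B}),\rho(\mathbf{C}))<1$ matches what the paper's own proof actually uses.
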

\begin{proof}
	We start by proving that $\rho({\bf A})\leq \sqrt{\rho({\bf B})}\sqrt{\rho({\bf C})}$. For that, consider $\tilde{\bf A}$, the matrix given by:
$$
\left[\tilde{\bf A}\right]_{i,j} = \sqrt{\left[{\bf B}\right]_{i,j}} \sqrt{\left[{\bf C}\right]_{i,j}}
$$
	Consider $\left|{\bf A}\right|$  the matrix such that $\left[\left|{\bf A}\right|\right]_{i,j}=\left|\left[{\bf A}\right]_{i,j}\right|$. Then,
$
\rho(\left|{\bf A}\right|) \leq \rho(\tilde{\bf A}).
$
 Recall, that for any matrix ${\bf D}$,
$$
	\rho({\bf D}) =\lim_{k\to+\infty} \|{\bf D}^k\|_{\infty}^{\frac{1}{k}}.
	$$
From the above convergence, we have:
	\begin{align*}
		&	\left[\tilde{\bf A}^k\right]_{i,j}=\sum_{i_1,\cdots,i_{k-1}}^n \left[\tilde{\bf A}\right]_{i,i_1}\left[\tilde{\bf A}\right]_{i_2,i_3}\cdots\left[\tilde{\bf A}\right]_{i_{k-1},j}\\
																												 &= \sum_{1\leq i_1,\cdots,i_{k-1}\leq n} \sqrt{\left[{\bf B}\right]_{i,i_1}\left[{\bf B}\right]_{i_2,i_3}\cdots\left[{\bf B}\right]_{i_{k-1},j}}\\
								   &\times\sqrt{\left[{\bf C}\right]_{i,i_1}\left[{\bf C}\right]_{i_2,i_3}\cdots\left[{\bf C}\right]_{i_{k-1},j}}\\
																													&\leq \sqrt{ \sum_{1\leq i_1,\cdots,i_{k-1}\leq n}\left[{\bf B}\right]_{i,i_1}\left[{\bf B}\right]_{i_2,i_3}\cdots\left[{\bf B}\right]_{i_{k-1,j}}}\\
								   &\sqrt{\sum_{1\leq i_1,\cdots,i_{k-1}\leq n}\left[ {\bf C}\right]_{i,i_1}\left[{\bf C}\right]_{i_2,i_3}\cdots\left[{\bf C}\right]_{i_{k-1},j}}\\
				 &=\sqrt{\left[{\bf B}^k\right]_{i,j}}
		\sqrt{\left[{\bf C}^k\right]_{i,j}}.
\end{align*}
With this inequality at hand, we are now in position to bound $\|\tilde{\bf A}^k\|_{\infty}$. 
We have:
\begin{align*}
	&	\left\|\tilde{\bf A}^k\right\|_{\infty}=\max_{1\leq i \leq n}\sum_{j=1}^n \left[\tilde{\bf A}^k\right]_{i,j} \\
																																			&\leq \max_{1\leq i \leq n}\sum_{j=1}^n \left[{\bf B}^k\right]_{i,j} \left[{\bf C}^k\right]_{i,j} \\
																							   &\leq  \max_{1\leq i \leq n}\sqrt{\sum_{j=1}^n \left[{\bf B}^k\right]_{i,j}} \sqrt{\sum_{j=1}^n \left[{\bf C}^k\right]_{i,j}} \\
																							   &\leq \sqrt{\left\|{\bf B}^k\right\|_{\infty}}\sqrt{\left\|{\bf C}^k\right\|_{\infty}}.
\end{align*}
We therefore have:
\begin{align*}
	\rho(\tilde{\bf A})&=\lim_{k\to+\infty} \left\|\tilde{\bf A}^k\right\|_{\infty}^{\frac{1}{k}}\\
							  &\leq \lim_{k\to+\infty}\left\|{\bf B}^k\right\|_{\infty}^{\frac{1}{2k}}\left\|{\bf C}^k\right\|_{\infty}^{\frac{1}{2k}} \\
						   &=\sqrt{\rho({\bf B})}\sqrt{\rho({\bf C})}.
\end{align*}
Therefore, $\rho(\tilde{\bf A})< 1$ and thus, $\rho({\bf A})<1$ if $\max(\rho({\bf C}),\rho({\bf B}))<1$. In this case, ${\bf I}_n-{\bf A}$ is  invertible and also are ${\bf I}_n-{\bf B}$ and ${\bf I}_n-{\bf C}$. Since $\left({\bf I}_n-{\bf A}\right)^{-1}=\sum_{k=0}^{+\infty} {\bf A}^k$. 
for any $1\leq i\leq n$, we have:
\begin{align*}
	&\sum_{j=1}^n\left|\left[\left({\bf I}_n-{\bf A}\right)^{-1}\right]_{i,j}\right|\leq \sum_{k=0}^{+\infty}\sum_{j=1}^n \left|\left[{\bf A}^k\right]_{i,j}\right|\\																																																																		&\leq \sum_{k=0}^{\infty}\sum_{j=1}^n\left[\left|{\bf A}\right|^k\right]_{i,j}\leq \sum_{k=0}^{+\infty}\sum_{j=1}^n \sqrt{\left[{\bf B}^k\right]_{i,j}}
 \sqrt{\left[\left|{\bf C}\right|^k\right]_{i,j}} \\
 &\leq \sum_{k=0}^{+\infty}\sqrt{\sum_{j=1}^n \left[{\bf B}^k\right]_{i,j} } \sqrt{ \sum_{j=1}^n\left[{\bf C}^k\right]_{i,j}} \\
 &\leq \sqrt{\sum_{k=0}^{+\infty}\sum_{j=1}^n \left[{\bf B}^k\right]_{i,j} } \sqrt{\sum_{k=0}^{+\infty}\sum_{j=1}^n \left[{\bf C}^k\right]_{i,j} }\\
 &\leq \sqrt{\left\|\left({\bf I}_n-{\bf B}\right)^{-1}\right\|_{\infty}}\sqrt{\left\|\left({\bf I}_n-{\bf C}\right)^{-1}\right\|_{\infty}}.
\end{align*}
As a consequence, we have:
$$
\left\|\left({\bf I}_n-{\bf A}\right)^{-1}\right\|_{\infty} \leq \sqrt{\left\|\left({\bf I}_n-{\bf B}\right)^{-1}\right\|_{\infty}}
\sqrt{\left\|\left({\bf I}_n-{\bf C}\right)^{-1}\right\|_{\infty}}.
$$
\end{proof}

\bibliographystyle{IEEEbib}
\bibliography{IEEEabrv,IEEEconf,tutorial_RMT}
\end{document}